\theoremstyle{plain}
\newtheorem*{theorem*}{Theorem}
\newcommand{\twopartdef}[4]
{
	\left\{
		\begin{array}{ll}
			#1 & \mbox{if } #2 \\
			#3 & \mbox{if } #4
		\end{array}
	\right.
}
\begin{document}

\title{Live Synthesis\thanks{This work was partially supported by the German Research Foundation (DFG) as part of the Collaborative Research Center ``Foundations of Perspicuous Software Systems'' (TRR 248, 389792660), by the European Research Council (ERC) Grant OSARES (No. 683300), and by the German Israeli Foundation (GIF) Grant ``Knowledge-based Synthesis'' (No. I-1513-407./2019).}}
\titlerunning{Live Synthesis}

\author{Bernd\ Finkbeiner \and Felix Klein\ \and Niklas\ Metzger }
\authorrunning{B. Finkbeiner\ \and F. Klein\ \and N.\ Metzger}

\institute{
CISPA Helmholtz Center for Information Security, Saarbr\"ucken, Germany\\
\email{\{finkbeiner,felix.klein,niklas.metzger\}@cispa.de}\\
}

\maketitle

\begin{abstract}
Synthesis automatically constructs an implementation that satisfies a
given logical specification.  In this paper, we study the \emph{live
synthesis} problem, where the synthesized implementation replaces an
already running system. In addition to satisfying its own specification,
the synthesized implementation must guarantee a sound transition
from the previous implementation. This version of the synthesis problem is
highly relevant in ``always-on'' applications, where updates 
happen while the system is running.
To specify the correct handover between the old and new
implementation, we introduce an extension of linear-time temporal
logic (LTL) called \emph{\liveltl}.  A \liveltl specification defines separate
requirements on the two implementations and ensures
that the new implementation satisfies, in addition to its own
requirements, any obligations left unfinished by the old
implementation. For specifications in \liveltl, we show that the live
synthesis problem can be solved within the same complexity bound as
standard reactive synthesis, i.e., in 2EXPTIME. 
Our experiments show the necessity of live synthesis for \liveltl 
specifications created from benchmarks of SYNTCOMP and robot control.

\end{abstract}

\section{Introduction}\label{sec:introduction}

The past decade has brought remarkable progress in the automatic synthesis of reactive systems from temporal specifications~\cite{SYNTCOMP,bosy,strix}. Traditionally, synthesis is seen as a  one-off method: the generated implementation is guaranteed, by construction, to satisfy the specification. If the specification changes, the process is repeated from the start. For systems that are \textit{always-on}, like banking systems, or controllers in power plants, this may, however, not be an option: when the requirements change, the system must be updated while it is still running, and the control must transition to the new version without disrupting the safety or functionality of the running system. While such \emph{live updates} are a well-studied concern in operating systems research (cf. \cite{operatingsystems}), they are, somewhat surprisingly, still a novelty in formal methods.

In this paper, we define a \textit{live system} as sequence of  implementations, each with a corresponding specification. The last element in the sequence is the currently executed system. 
Performing a live update terminates the currently active system and extends the sequence with a new implementation.
The key challenge of live updates is that 
any obligations imposed by the specification of the terminated system that are not yet satisfied at the time of the update must be taken care of by the newly active system.
This transfer of obligations is important to make the update transparent from the  user's perspective.
Consider, for example, an arbiter specified as the LTL formula $\LTLglobally(\mathit{request} \rightarrow \LTLfinally \mathit{grant})$, which requires that every $\mathit{request}$ is eventually followed by a $\mathit{grant}$. If the update occurs after some $\mathit{request}$, but before the corresponding $\mathit{grant}$, then the new implementation must still guarantee the occurrence of the $\mathit{grant}$.

The problem of \emph{model checking} live updates is to check whether a given new implementation will result in a correct live update; the \emph{synthesis} problem is to automatically find such an implementation.
To specify the correct handover between the old and new implementation, we introduce an extension of linear-time temporal logic (LTL) called \emph{LiveLTL}. A LiveLTL specification defines requirements on the two implementations and ensures that the new implementation satisfies, in addition to its own requirements, any obligations left unfinished by the old implementation. 
We consider two variants of the model checking and synthesis problems. In \emph{finite-trace live updates}, we only require the update to be correct in a specific situation, i.e., after a specific  execution of the previous implementation. In \emph{universal updates}, we require that the update can occur at any time. 
We show that model checking live updates is \textsc{PSPACE}-complete in the initial and update specification.
Synthesis is 2\textsc{EXPTIME}-complete in the combination of the specifications for both update variants.

We report on experience with a prototype implementation of our approach on a range of benchmarks, including examples taken from the synthesis competition and a robotic case study. 
In our experiments, live synthesis is used to construct live updates built on reasonable pairs of specifications.
The results show the necessity of verifying live updates with the adapted semantics of \liveltl and that every considered specification states obligations for the update.

\section{Running Example -- Relay Station}\label{sec:example}
Consider the following setup: a satellite has been positioned in the
orbit of Mars in combination with multiple base stations on the
planet. The base stations take samples from the extraterrestrial
environment, analyze them and submit their findings to the
satellite. After the data has been sent by a station, it waits for
instructions from the satellite: whether the sample must be
further analysed, or whether it can be discarded and a new sample must
be taken. The satellite, on the other hand, provides the stations with
the corresponding instructions and collects the data of all stations
for relaying it back to earth. To this end, the satellite takes care
that always some data of all base stations has been collected to be
present in the report for earth.

We formalize this behavior of the satellite in LTL. On the input side, the satellite receives $ n $
\textit{measurements}~$ m_{j} $ of every base station, where
$ 0 \leq j < n $ ranges over the $ n $ deployed base stations on the
planet. On the output side, the satellite outputs
\textit{instructions}~$ i_{j} $ and can create a \textit{report}~$ r $
to be sent back to earth. The behavior is formalized using the
following guarantees:
First of all, every measurement~$ m_{i} $ must be responded to
eventually and instructions are only sent in response to received measurements.
\vspace{-.2cm}
\begin{equation*}
  \phi_{1} ~:=~ \bigwedge\limits_{j=0}^{n} m_{j} \rightarrow\, \LTLnext \LTLeventually i_{j} \quad\quad \phi_{2} ~:=~ \bigwedge\limits_{j=0}^{n} \LTLglobally \neg m_{j} \rightarrow\, \LTLeventually \LTLglobally \neg i_{j}
\vspace{-.2cm}
\end{equation*}
Furthermore, a report is generated as long as every base station
submits a measurement regularly, while no report needs to be generated as long as some measurements
are still missing.
\vspace{-.2cm}
\begin{equation*}
  \phi_{3} ~:=~ (\bigwedge\limits_{j=0}^{n} \LTLeventually m_{j}) \rightarrow\, \LTLeventually r \quad\quad \phi_{4} := (\bigvee\limits_{j=0}^{n} \LTLglobally \neg m_{j}) \rightarrow\, \LTLeventually \LTLglobally \neg r
\vspace{-.2cm}
\end{equation*}
All guarantees $ \phi_{j} $ must be satisfied at every point in
time. We obtain the overall specification
$ \varphi := \bigwedge_{j=1}^{4} \LTLglobally \phi_{j} $. The
specification is realizable, as witnessed by the synthesized labeled transition system (LTS) for
two base stations in \Cref{fig:relaystation}.
We follow the transition system for $\LTLglobally \phi_{1}$. 
Starting in the initial state, if $m_0$ and $m_1$ is received, we stay in the same state and $\LTLnext \LTLfinally m_0$ as well as $\LTLnext \LTLfinally m_1$ is satisfied.
The transition system follows the $\neg m_0$ edge to the state labeled with $i_1$ to satisfy the subformula $\LTLnext \LTLfinally i_1$.
Note that $m_1 \rightarrow \LTLeventually i_1$ would be directly satisfied in the initial state since the Moore semantics evaluates the formula based on the current state and next edge label.
The states at the top right and bottom left ensure that $\varphi_2$ is satisfied, i.e., it waits for inputs before the corresponding output is set.
Corresponding to $\varphi_4$, the top left and bottom right states control the output $r$ which is only allowed to be true as long as all measurements are received.
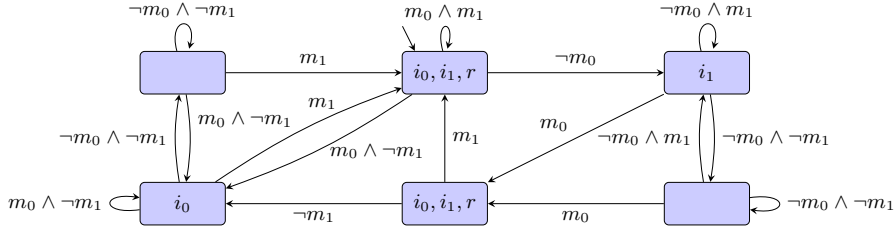
\begin{figure}[t]
  \centering
\resizebox{.99\textwidth}{!}{
  \begin{tikzpicture}

    \node[draw,fill=blue!20,minimum height=2em,minimum width=4em,rounded corners=2] (N0) at (0,0) {$ i_{0}, i_{1}, r $};        
    \node[draw,fill=blue!20,minimum height=2em,minimum width=4em,rounded corners=2] (N1) at (4,0) {$ i_{1} $};
    \node[draw,fill=blue!20,minimum height=2em,minimum width=4em,rounded corners=2] (N2) at (4,-2) {$ $}; 
    \node[draw,fill=blue!20,minimum height=2em,minimum width=4em,rounded corners=2] (N3) at (0,-2) {$ i_{0}, i_{1}, r $};
    \node[draw,fill=blue!20,minimum height=2em,minimum width=4em,rounded corners=2] (N4) at (-4,-2) {$ i_{0} $}; 
    \node[draw,fill=blue!20,minimum height=2em,minimum width=4em,rounded corners=2] (N5) at (-4,0) {$ $};

    \node[minimum height=0.em,minimum width=2.1em,anchor=north] (NN) at (N0.north) {};
    \node[minimum height=0.2em,minimum width=2.1em,anchor=east] (NX) at (N3.east) {};
    \node[circle,inner sep=0pt] (I) at ($ (N0.north west) + (0,0.4) $) {};

    \path[->,>=stealth]
    (I)
      edge ($ (N0.north west) + (0.2,0) $)
    (NN)
      edge[loop above] node {$ m_{0} \wedge m_{1} $} (N0)
    (NX)
      edge[loop right,opacity=0] node[opacity=0] {\phantom{$ m_{0} \wedge m_{1} $}} (N3)            
    (N0)
      edge node[above] {$ \neg m_{0} $} (N1)
      edge[bend left=7] node[below right,yshift=5] {$ m_{0} \wedge \neg m_{1} $} (N4)
    (N1)
      edge[loop above] node {$ \neg m_{0} \wedge m_{1} $} (N1)
      edge[bend left=9] node[right] {$ \neg m_{0} \wedge \neg m_{1} $} (N2)
      edge node[above left] {$ m_{0} $} (N3)
    (N2)
      edge[loop right, in=352, out=8, min distance = 6mm] node {$ \neg m_{0} \wedge \neg m_{1} $} (N2)
      edge[bend left=9] node[left] {$ \neg m_{0} \wedge m_{1} $} (N1)
      edge node[below] {$ m_{0} $} (N3)
    (N3)
      edge node[right] {$ m_{1} $} (N0)
      edge node[below] {$ \neg m_{1} $} (N4)
    (N4)
      edge[loop left, out=188, in=172, min distance = 6mm] node {$ m_{0} \wedge \neg m_{1} $} (N4)
      edge[bend left=7] node[above right, yshift=5, xshift=-2] {$ m_{1} $} (N0)
      edge[bend left=9] node[left] {$ \neg m_{0} \wedge \neg m_{1} $} (N5)
    (N5)
      edge[loop above] node {$ \neg m_{0} \wedge \neg m_{1} $} (N5)
      edge[bend left=9] node[right, yshift=8] {$ m_{0} \wedge \neg m_{1} $} (N4)
      edge node[above] {$ m_{1} $} (N0)
    ;
  \end{tikzpicture}}
  \caption{Synthesized LTS for the satellite specification.}
  \label{fig:relaystation}
\end{figure}%
Consider a situation, where one of the base stations fails. 
The satellite controller must be updated, since the satellite would
wait indefinitely for the data of the broken base station
otherwise. The report generation would also be broken. However, we
cannot just eliminate the broken base station from the original
specification, synthesize again and restart the satellite with the new
result. The reason is that there still may be an outstanding
instruction of the satellite for one of the remaining base stations,
for which this base station is actively waiting. Therefore, the
updated specification still needs to take this obligation of the old
implementation into account.

In the remainder of this paper we consider the necessary changes to the
synthesis procedure that are required for a correct update of the
specification and synthesized implementation.
An adapted verification framework is introduced that enables the validation of live systems.
We present a logic that avoids the break of the base stations and satellite due to the disregarded obligations of the old system during update.

\section{Preliminaries}\label{sec:prelimnaries}
\emph{Linear Temporal Logic.} %
Linear temporal logic (LTL) \cite{TheTemporalLogicsOfPrograms} is a logic for specifying correctness of linear-time systems.
The syntax is a combination of state and path operators over a set of atomic propositions (AP) that define behavior over infinite time.
Formulas in LTL are built according to the grammar 
$\phi ::= \LTLtrue \mid \LTLfalse \mid a \mid \neg \phi \mid \phi_1 \wedge \phi_2 \mid \LTLnext \phi \mid  \phi_1 \LTLuntil \phi_2$
where $a \in AP$. 
Temporal operators are \textit{next} $\LTLnext$ and \textit{until} $\LTLuntil$, all other operators are boolean connectives.
We assume every LTL formula to be in release positive normal form (PNF) where negations are only allowed in front of atomic propositions.
For readability, implication $\rightarrow$ and equivalence $\leftrightarrow$ as well as the common abbreviations \textit{eventually} $\LTLeventually a $ for $\LTLtrue \LTLuntil a$ and \textit{globally} $\LTLglobally a$ for $\neg\LTLeventually \neg a$ are used throughout this paper.
Defining the LTL semantics, the operator $\vDash$ evaluates infinite traces $\sigma$ and explicit index $i$ against LTL formulas~$ \varphi $ where traces are words over letters $ \sigma \in \apomega$. 
For example, $\sigma$ satisfies $\LTLnext a$ if in the next step $a$ holds in $\sigma$ and $a \LTLuntil b$ if $a$ holds until $b$ holds.
A trace $\sigma = A_0A_1A_2 \ldots$ with $A_i \in 2^{AP}$ is an infinite sequence of sets of atomic propositions.
We use the infix notation $\sigma[n,m]$ to crop the trace to the sub-trace from position n to m, $\sigma[n,m] = A_nA_{n+1}\ldots A_{m-1}$, where $A_i \in 2^{AP}$, and concatenate the finite trace $\sigma_1$ with the possibly infinite trace $\sigma_2$ with $\sigma_1 \cdot \sigma_2$.
The semantic operator $\vDash$ builds a language of a specification $\phi$ with $\text{Words}(\phi) = \{\sigma \in \apomega \mid \sigma, 0 \vDash \phi\}$.
A trace $\sigma$ that is terminated at an arbitrary position $m$, i.e., $\sigma[0,m]$, is a finite trace and denoted by $\trace$.
The function $expand: LTL \rightarrow LTL$ uses the standard LTL expansion rules to unroll the given formula, $expand_n$ repeats $expand$ $n$ times. For example, $expand_1(a U b) = b \vee (a \wedge \LTLnext (a \LTLuntil b))$.
The function $\af : LTL \times 2^{AP} \rightarrow LTL$ \cite{FromLTLToDeterministicAutomata} evaluates the formula on a given atomic proposition assignment and returns the remaining formula, e.g. $\af(a \LTLuntil b, \{a\}) = a \LTLuntil b$ and $\af(a \LTLuntil b, \{b\}) = \LTLtrue$.
$\af(\phi, \sigma[0,n])$ is defined as $\af(\af(\phi, \sigma_0), \sigma[1,n])$ with $\af(\phi, \epsilon) = \phi$. 

\noindent
\emph{Transition Systems.}
The reactive model for LTL are transition systems where state labels correspond to the output of systems and transition labels correspond to the input of the environment.
Given a finite set of directions $\Upsilon$ and a finite set of labels $\Sigma$, a $\Sigma$-labeled $\Upsilon$-transition system is a tuple $TS = (T, t_0, \tau, o)$, consisting of a finite set of states $T$, an initial state $t_0 \in T$, a transition function $\tau: T\times \Upsilon \rightarrow T$, and a labeling function $o: T \rightarrow \Sigma$.  
Given $AP$ and partition $AP = O \cup I$ for output and input atomic propositions, implementations for LTL specifications are $2^O$-labeled $2^I$-transition systems ($TS$). 
The paths of a transition system start in $t_0$ and follow the transition function $\tau$ collecting input and output labels with the output function $o$. 
The traces of a transition system $\mathit{Traces}(TS)$ omit the state information of paths.
We assume transition systems without terminal states and a deterministic transition function.

\noindent
\emph{Model Checking and Synthesis.} Model checking a transition system $TS$ against a specification $\phi$ checks the relation $\mathit{Traces}(TS) \subseteq \mathit{Words}(\phi)$.
The problem of automatically constructing a transition system that satisfies the model checking property is referred to as \textit{synthesis}.
In the course of this paper, we refer to the algorithms of LTL model checking and synthesis as black box algorithms.
Similar to $\mathit{Traces}(TS)$, we denote the set of finite traces of $TS$ by $\mathit{FinTraces}(TS)$.

\section{Live Updates}\label{sec:liveupdates}
Common formalisms for verification agree on the following assumption: different system versions are analyzed in isolation, i.e., everything that happened before the initial state of the new implementation is irrelevant for its correctness.
For updates at runtime, this assumption is infeasible.
The update system has to satisfy \textit{obligations} that were stated during the execution of the previous system to be correct.
In this section, we set the foundations for a specification language that is able to express correctness of a live update by defining the structure of two live update problems.
We identify the factors affecting the update process and formalize the interplay of the components.
The definitions are independent of specific temporal logics and can be adapted to various logics and system models.

Proving the correctness of systems either by model checking or synthesis assumes the existence of a starting point that is handled as the initial state.
For live updates, the starting point of verification is not the initial state of the update system, but the initial state of the system running beforehand.
Running systems create obligations that cannot be discarded when updated live, otherwise, for example, an observer would starve waiting for its response.
The recent development of live systems enforces the sensibility of correctness algorithms to validate systems w.r.t.\ the \textit{context} they are started in.
For linear-time systems given as transition systems, we define the context as the finite execution of the previous system combined with its specification.
The finite execution implicitly changes the state of the formula which we refer to as \textit{active} formula.
We capture this change to the formula with a function $\Psi$, which, given a finite trace and a specification, returns a specification that captures the obligations needed for the satisfaction of the update system. 
With defining $\Psi$, one is able to vary the impact of the initial system to the update system.
Verifying an update system with standard LTL, one implicitly defines $\Psi$ to be $\LTLtrue$ for every input, enforcing no obligations on the update system.
\begin{definition}[Finite Trace Live Update]\label{def:finitetraceliveupdate}
Let~$TS_I$ be an initial system, $TS_U$ be an update system, $\phi$ be an initial specification, $\psi$ be an update specification, and $\eta$ be a finite trace of $TS_I$.
$TS_U$ is considered correct if it is correct w.r.t.\ $\psi$ and the result of $\Psi(\trace,\phi)$ for the function $\evolve : (2^{AP})^{*} \times LTL \rightarrow LTL$ defining the obligation.
\end{definition}
\begin{figure}[t]
\centering
\resizebox{.98\textwidth}{!}{
\centering
    \resizebox{\textwidth}{!}{
    \begin{tikzpicture}[every text node part/.style={align=center},x = 6cm]
        \node[draw, rectangle]  (4) at (0.25,0) {\textit{initial system} \\ $TS_I \vDash \phi$}; 
        \node[draw,  rectangle] (3) at (1,0){\textit{finite execution}\\$\trace_{[0,n-1]} \in FinTraces(TS_I)$};
        \node[draw,  rectangle] (1) at (2,.5){\textit{update system}\\$TS_U$};
        \node[draw= none, rectangle] (2)  at (1,0.9){};
        \node (node) (0) at (2.5, .9) {\cmark};
        \node (node) (00) at (2.5, .2) {\xmark};
        \draw[->, thick]
        (4) edge node[below, sloped]  {\textit{n-steps}} (3)
        (3) edge node [below, sloped] {$\evolve(\trace_{[0,n-1]}, \phi)$}(1)
        (2) edge node [above, sloped] {\textit{update specification  }$\psi$}(1)
        (1) edge (0)
        (1) edge (00)
        ; 
    \end{tikzpicture}}
}
\caption{The finite trace live update with $\varphi$ as the initial specification, $\psi$ as the update specification, and $\Psi$ as the function computing the obligation for $TS_U$.}
\label{fig:liveupdates}
\end{figure}
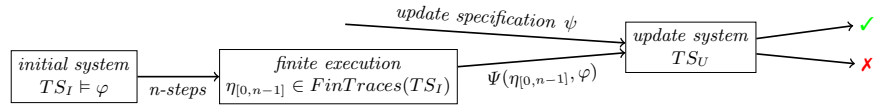
The finite trace live update handles the context of the update as white-box: the finite execution of the previous system is fully known.
For this explicit execution, the obligation is computed and, together with the specification of the update system, verified against the update.
\Cref{fig:liveupdates} shows the dependencies built by the finite trace live update where $n$ is the number of discrete time-steps of the finite execution.
However, the explicit finite execution of the initial system is not always available. 
Therefore, \Cref{def:universalliveupdate} introduces update correctness for all possible finite paths of the initial system.
\begin{definition}[Universal Live Update]\label{def:universalliveupdate}
Let~$TS_I$ be an initial system, $TS_U$ be an update system, $\phi$ be an initial specification, and $\psi$ be an update specification. $TS_{U}$ is considered correct if it is correct w.r.t.\ $\psi$ and $\evolve(\eta, \phi)$ for all possible finite traces $\eta$ of $TS_I$.
\end{definition}
The context of the update is handled as black-box in the universal case. 
The explicit execution and the system's state of the update is unknown.
Nevertheless, if all possible obligations are satisfied by the update system, the update is guaranteed to be correct.
\Cref{def:universalliveupdate} increases the number of possibilities to be verified, since arguing over an infinite set of finite traces cannot be performed directly.
In comparison to the explicit live update, the length $n$ is kept arbitrary since every finite trace may enforce its particular obligation.

Since we consider reactive systems, it is natural to aim for an update system that reacts to the update context and contains different initial states for different contexts, i.e., for each result of $\Psi(\trace, \phi)$ the update system starts differently.
Note that this problem is covered by the finite trace live update if the number of different contexts is finite.
One can solve the update problem for each context and combine the resulting update systems accordingly.
In general, multiple other meaningful models of update correctness can be designed, e.g., an existential version defining the existence of an update point in the initial system's future.
Nevertheless, finite trace and universal live updates suffice for the course of this paper and build a justifiable framework for live updates.

\section{A Temporal Language for Live Updates}\label{sec:liveltl}
With the two live update problems defined, we introduce \liveltl to state and verify the correctness of live updates.
\liveltl is an extension to LTL and specifies live update properties that automatically enforce the obligations of the previous execution on the update system.
The syntax and semantics of \liveltl as well as the language equivalence to LTL are shown.
Moreover, we identify the class of obligations that can be stated by \liveltl specifications.

\subsection{LiveLTL}
\liveltl is designed according to three aspects: (1) the initial system is not able to enforce new obligations after termination, (2) all obligations stated before termination are satisfied by the update system, and (3) obligations are satisfiable in finite time.
This guideline is a trade-off between independence of the previous system and incurring obligations from the initial specification to the update system.
The definition of \liveltl follows the finite trace update structure and builds the language for inputs as a combination of a finite and an infinite trace evaluation.
The syntax is taken from LTL and we assume the set of atomic proposition for the initial system to be a subset of the atomic propositions of the update system.
As extension to the semantic operator $\vDash$ of LTL, the operators $\vDashInitial$ and $\vDashUpdate$ form the language for the initial system and the update system respectively.
$\vDashUpdate$ performs an index shift from time-step 0 to the update position and evaluates the changed formula with the LTL operator and is defined as $\sigma, i  \vDashUpdate \phi \text{ iff }\ \sigma, i+|\trace| \vDash \phi$.
Since the update specification is only relevant for the update system, the shift of size $|\trace|$ enables the correct evaluation of the update system's part of the trace.
$\vDashInitial$ inserts $|\trace|$ as upper bound for recurrent formulas, i.e., formulas with the \textit{release} operator:
\begin{align*}
\sigma, i & \vDashOriginal  \LTLtrue && \phantom{t}\sigma, i  \nvDashOriginal  \LTLfalse\\
    \sigma, i & \vDashOriginal  a && \text{ iff }\ A_i \vDash a, \text{i.e. } a \in A_i\\
    \sigma, i & \vDashOriginal  \neg a && \text{ iff }\ A_i \nvDash a, \text{i.e. } a \notin A_i\\
    \sigma, i  & \vDashOriginal \varphi_1 \wedge \varphi_2  &&\text{ iff }\ \sigma, i \vDashOriginal \varphi_1 \text{ and}\ \sigma, i \vDashOriginal \varphi_2\\
    \sigma, i  & \vDashOriginal \varphi_1 \vee \varphi_2  &&\text{ iff }\ \sigma, i \vDashOriginal \varphi_1 \text{ or}\ \sigma, i \vDashOriginal \varphi_2\\
    \sigma, i & \vDashOriginal \LTLnext \varphi &&\text{ iff }\ {\sigma, i+1 \vDashInitial \varphi}\\
    \sigma, i & \vDashOriginal \varphi_1 \LTLuntil \varphi_2 && \text{ iff }\ \exists j, j \geq i.\ \sigma,j \vDashOriginal \varphi_2\ \text{and} \ \forall k,i\leq k < j.\ \sigma, k \vDashOriginal \varphi_1\\
    \sigma, i & \vDashOriginal \phi_1 \LTLrelease \phi_2 && \text{ iff }\ \forall j,{ \color{red}|\trace| > j \geq i.}\ \sigma, j \vDashOriginal \phi_2\ \text{or}\\
    &&&\phantom{ a}\ \exists k,{ \color{red}|\trace| > k \geq i.}\ (\sigma, k \vDashOriginal \phi_1\wedge \forall l, i \leq l \leq k.\ \sigma, l \vDashOriginal \phi_2)
\end{align*}
Informally, $\phi_1 \LTLrelease \phi_2$ opens the \textit{obligation} $\phi_2$ in every execution step which contradicts (1) if evaluated after the update.
As standard LTL semantics enables the specification to infinitely open new obligations, $\vDashInitial$ is built to limit this behavior to the actual finite execution of the initial system.
The definition of $\vDashInitial$ mostly follows the definition of $\vDash$, except for the evaluation of \textit{release} formulas.
For all indices greater or equal to the length of the trace, $\phi_1 \LTLrelease \phi_2$ is immediately satisfied, thus imposing the end of newly created obligations from the initial implementation.
Therefore, the initial operator permits the transfer of finitely satisfiable obligations to the update system (2), but forbids the impact of the initial system after its termination (1).
Note that for LTL formulas in PNF, all operators except \textit{release} only specify finite behavior and all open obligations are satisfiable in finite time (3).
The newly introduced operators are used to define the language of \liveltl.
\begin{definition}[Language of \liveltl]
    Let $\phi,\psi$ be LTL formulas and let $\trace \in (2^{AP})^{*}$. The linear time property induced by $\phi, \psi$, and $\trace$ is
    \vspace{-0.12cm}
     \[\text{Words}(\phi,\psi, \trace) = \{\trace \cdot \sigma \in (2^{AP})^{\omega} \mid
     \trace \cdot \sigma, 0 \vDashInitial \phi\ \wedge\trace \cdot \sigma, 0 \vDashUpdate \psi\}.\]
\end{definition}
     \vspace{-0.09cm}
The language is dependent on the initial specification, the update specification, and the finite trace.
Evaluating the inclusion of an infinite trace with the first $|\trace|$ elements being fixed consists of a combination of the operators $\vDashInitial$ and $\vDashUpdate$.
The initial \liveltl operator is defined on the syntactic structure of the initial formula and is insensitive with respect to syntactic tautologies.
Providing formulas without syntactic ambiguity that cannot be dissolved in $|\trace|$ time steps is left to the specifier.
The following theorem relates \liveltl and LTL.
\begin{theorem}\label{th:liveltlltl}
    \liveltl and LTL are equally expressive.
\end{theorem}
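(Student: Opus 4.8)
The plan is to read ``equally expressive'' as equality of the two classes of definable $\omega$-languages, $\{\text{Words}(\chi)\mid\chi\in\text{LTL}\}$ and $\{\text{Words}(\phi,\psi,\trace)\mid\phi,\psi\in\text{LTL},\ \trace\in(2^{AP})^{*}\}$, and to prove both inclusions. The inclusion of LTL into \liveltl is immediate: given an LTL formula $\chi$, I would pick the empty prefix $\trace=\epsilon$, initial formula $\LTLtrue$, and update formula $\chi$. Because $|\trace|=0$ the shift in $\vDashUpdate$ vanishes, so $\trace\cdot\sigma,0\vDashUpdate\chi$ reduces to $\sigma,0\vDash\chi$, while $\vDashInitial\LTLtrue$ holds on every word; hence $\text{Words}(\LTLtrue,\chi,\epsilon)=\text{Words}(\chi)$.

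The substantial direction is embedding an arbitrary \liveltl language $\text{Words}(\phi,\psi,\trace)$ into LTL, for which I would build a single formula $\chi=\chi_\trace\wedge\widehat{\phi}\wedge\LTLnext^{|\trace|}\psi$ as a conjunction of three parts. The part $\chi_\trace$ pins the first $|\trace|$ letters to exactly $\trace$ via a finite conjunction of literals $\LTLnext^{k}a$ or $\LTLnext^{k}\neg a$ (one per position $k<|\trace|$ and per $a\in AP$), restricting models to words of shape $\trace\cdot\sigma$. The part $\LTLnext^{|\trace|}\psi$ captures $\vDashUpdate$, since by definition $\trace\cdot\sigma,0\vDashUpdate\psi$ iff $\trace\cdot\sigma,|\trace|\vDash\psi$ iff $\trace\cdot\sigma,0\vDash\LTLnext^{|\trace|}\psi$. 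The remaining part $\widehat{\phi}$ must encode $\vDashInitial\phi$, and this is the crux, because the only place where \liveltl departs from ordinary LTL is the release operator, which $\vDashInitial$ cuts off at the \emph{absolute} index $|\trace|$.

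Working in PNF, I would define a budget-indexed translation $tr_m$, where the budget $m=|\trace|-i$ counts the steps remaining before the cut-off at the current evaluation index $i$ (and $m=0$ once $i\ge|\trace|$). It is homomorphic on atoms and Boolean connectives, decrements the budget under $\LTLnext$, and unrolls the two fixpoint operators one step while decreasing the budget:
\[
 tr_m(\alpha \LTLuntil \beta)= tr_m(\beta)\vee\bigl(tr_m(\alpha)\wedge \LTLnext\, tr_{m-1}(\alpha \LTLuntil \beta)\bigr),\qquad tr_m(\alpha \LTLrelease \beta)= tr_m(\beta)\wedge\bigl(tr_m(\alpha)\vee \LTLnext\, tr_{m-1}(\alpha \LTLrelease \beta)\bigr),
\]
with base cases $tr_0(\alpha \LTLrelease \beta)=\LTLtrue$ (a release reaching the cut-off carries no obligation) and $tr_0(\alpha \LTLuntil \beta)=tr_0(\alpha)\LTLuntil tr_0(\beta)$, where $tr_0$ otherwise recurses structurally but replaces every remaining release subformula by $\LTLtrue$. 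I set $\widehat{\phi}:=tr_{|\trace|}(\phi)$. The recursion terminates under the lexicographic order on (budget, formula size), so $\widehat{\phi}$ is a genuine LTL formula (of size possibly exponential in $|\trace|$, which is irrelevant for expressiveness). The until operator must be unrolled rather than copied, because a release nested in $\alpha$ or $\beta$ is evaluated at advancing positions and must see a shrinking budget.

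The correctness lemma I would establish by induction on the same lexicographic order states: for every word $\rho$ and index $i$, $\rho,i\vDashInitial\phi$ iff $\rho,i\vDash tr_{\max(|\trace|-i,\,0)}(\phi)$. The atomic, Boolean, and next cases are immediate, and for release I would check that the one-step unrolling above matches exactly the bounded quantifier ranges $|\trace|>j\ge i$ of $\vDashInitial$, with the empty range at $i=|\trace|$ yielding $\LTLtrue$. I expect the delicate case to be an until whose witness position lies beyond the cut-off: there the budget has already hit $0$, and I must argue that from index $|\trace|$ on $\vDashInitial$ coincides with ordinary $\vDash$ on the release-free image produced by $tr_0$, precisely because any release evaluated at an index $\ge|\trace|$ is vacuously satisfied. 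Conjoining the three parts then yields $\text{Words}(\phi,\psi,\trace)=\text{Words}(\chi)$, which closes the second inclusion and hence the theorem.
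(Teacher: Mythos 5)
Your proposal is correct and takes essentially the same route as the paper: the easy direction uses the identical instantiation $(\LTLtrue,\chi,\epsilon)$, and the hard direction builds the same three-part conjunction of a trace encoding, the shifted update formula $\LTLnext^{|\trace|}\psi$, and a $|\trace|$-step unrolling of $\phi$ in which release obligations surviving past the cut-off are replaced by $\LTLtrue$. Your budget-indexed translation $tr_m$ is simply a fused, single-recursion presentation of the paper's two-phase construction $\stripoperator(expand_{|\trace|}(\phi))$, with the correctness induction spelled out more explicitly than in the paper's proof.
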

The proof is a reduction via encoding the initial trace into the LTL formula.
While being equally expressive, \liveltl enables the direct evaluation of the newly introduced live update problems on a given context.
Correctness for finite trace live updates follows from standard language inclusion.
\begin{definition}[Finite Trace LiveLTL Update]\label{def:finitetraceliveupdateltl}
    Let $TS_U$ be an update system, $\phi$ be an initial specificaiton, $\psi$ be an update specification, and $\eta$ be a finite trace.
    $TS_U$ is correct w.r.t. finite trace \liveltl if
    $\trace \cdot \text{Traces}(TS_U) \subseteq \text{Words}(\phi, \psi, \trace).$
\end{definition}
\begin{example}
Interpreting the running example as finite trace LiveLTL update, we can obtain the finite trace $\trace = \{m_1, i_0, i_1, r\}, \{i_1\}, \{m_0, m_1\}$ as execution of the relay station.
Evaluating $\trace$ with $\vDashInitial$ shows that $\LTLfinally\,i_0$, $\LTLfinally\,i_1$, and $\LTLfinally\,r$ need to be satisfied by the update system, since both measurements are unanswered and no report was given after both base stations sent their measurements.
Note that changing the last trace element to $\{m_0\}$ eliminates the obligations for the base station $i_1$  and the report $r$.
\end{example}
The finite trace update directly translates to the definition of \liveltl, whereas the universal live update adds a level of quantification.
\begin{definition}[Universal Live LTL Update]\label{def:universalliveupdateltl}
    Let $TS_I$ be an initial system, $TS_U$ be an update system, $\phi$ be an initial specification, and $\psi$ be an update specification.
   	$TS_{U}$ is correct w.r.t. universal \liveltl if
   	\vspace{-0.1cm}
    \[\forall \trace \in \text{FinTraces}(TS_I): \trace \cdot \text{Traces}(TS_U) \subseteq \bigcup_{\trace \in \text{FinTraces}(TS_I)}\text{Words}(\phi, \psi, \trace).\]
\end{definition}
\vspace{-0.1cm}
To satisfy the universal update condition, the update system needs to be robust against every possible obligation of the initial system.
We explore the model checking and synthesis problems of \liveltl in \Cref{sec:model-checkingandsynthesis}.

\subsection{Obligations}
The impact of the initial system on the update system is declared by the operator $\vDashInitial$ and forms a class of temporal properties.
We investigate this class and build a monitor that traces the open obligations during the execution of a system.
In practice, the explicit update to be performed is unknown during the design of the initial system.
Therefore, one approach to face live updates is keeping track of \textit{open} obligations while the system is executed.
To obtain the expressivity of the obligations possibly enforced by \liveltl, we introduce the \textit{obligation property}.
\begin{definition}[Obligation Property]
    A linear time property $P_{obl}$ over AP is called an \textit{obligation} property if for all words $\sigma \in P_{obl}$ there exists a good prefix, i.e., for every $\sigma \in P_{obl}$ there exists a word $ \sigma[0, m]$ s.t. $\forall x. x \in (2^{AP})^\omega:\ \sigma[0,m] \cdot x \in P_{obl}$. Obligation properties coincide with the class of \textit{co-safety} properties.
\end{definition}
Obligations and co-safety properties describing the same language is a natural outcome of the \liveltl semantics.
To obtain the open obligations with constant cost during runtime, the construction of a monitor tracking the obligations provides a space bounded solution.
The monitor is meant to be constructed simultaneously to the initial system.

\begin{definition}[Obligation Monitor]
        Let $\stripoperator : LTL \rightarrow LTL$ be a function syntactically substituting every $\LTLrelease$ by $\LTLtrue$.
    A deterministic obligation monitor for an LTL formula $\phi$ is the tuple $\obligationmonitor_\phi = (T, t_0, \Upsilon, \af, o)$, where $T = \{ \phi' \mid \omega \in \apstar: \phi' = \af(\phi, \omega)\}$ is the set of states, $t_0 = \stripoperator(\phi)$ is the initial state, $\Upsilon = 2^{AP}$ is the set of directions, $\af$ is the transition function defined over $T$ and $\Upsilon$, and $o(t) = \stripoperator(t)$ is the labeling function.
\end{definition}
Since the state space of $\obligationmonitor_\phi$ corresponds to the state exploration of $\phi$, converting the formulas to obligations is achieved by $\stripoperator$ and stored in the labeling function.
This can be interpreted as the obligations that have to be satisfied by the update system if an update is initiated in this state.
The obligation monitor only tracks states and does not guarantee that every reachable state corresponds to a reachable state of a correct implementation of $\phi$.
We justify this property by assuming $TS_I$ is correct.
\newsavebox{\boxfinally}
\savebox{\boxfinally}{$ \LTLfinally \,$}

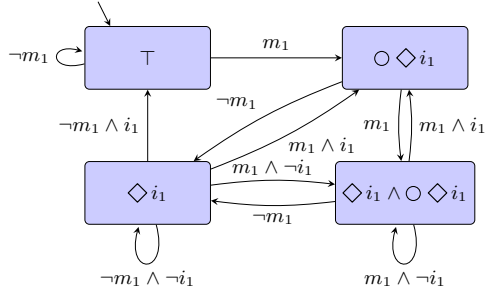
\begin{figure}[t]
\centering
\resizebox{.6\textwidth}{!}{
\begin{tikzpicture}
        \node[draw,fill=blue!20,minimum height=3em,minimum width=6em,rounded corners=2] (N0) at (0,0) {$ \LTLtrue$};
    \node[draw,fill=blue!20,minimum height=3em,minimum width=6em,rounded corners=2] (N3) at (4,0) {$\LTLnext$ \usebox{\boxfinally}$i_1$};
    \node[draw,fill=blue!20,minimum height=3em,minimum width=6em,rounded corners=2] (N4) at (4,-2.1) {\usebox{\boxfinally}$ i_1 \wedge \LTLnext$ \usebox{\boxfinally}$i_1$};
    \node[draw,fill=blue!20,minimum height=3em,minimum width=6em,rounded corners=2] (N5) at (0,-2.1) {\usebox{\boxfinally}$i_1$};

    \node[minimum height=0.2em,minimum width=2.1em,anchor=west] (NN) at (N0.west) {};
    \node[minimum height=0.2em,minimum width=2.1em,anchor=east] (NX) at (N3.east) {};
    \node[circle,inner sep=0pt] (I) at ($ (N0.north west) + (0.2,0.4) $) {};

    \path[->,>=stealth]
    (I)
      edge ($ (N0.north west) + (0.4,0) $)
    (NN)
      edge[loop left] node {$ \neg m_1 $} (N0)
    (NX)
      edge[loop right,opacity=0] node[opacity=0] {\phantom{$ m_{0}  $}} (N3)
    (N0)
      edge node[above] {$ m_{1} $} (N3)
    (N3)
      edge[bend right=7] node [left]{$ m_{1} $} (N4)
      edge[bend right=7] node[above left] {$\neg m_1$}(N5)
    (N4)
      edge[loop below] node {$ m_1 \wedge \neg i_1 $} (N4)
      edge[bend left=7] node[below] {$ \neg m_{1}$} (N5)
      edge[bend right = 7] node [right]{$m_1 \wedge i_1$}(N3)
    (N5)
      edge[loop below] node {$ \neg m_1 \wedge \neg i_1 $} (N5)
      edge[bend left=7] node[above, yshift=-1] {$ m_1 \wedge \neg i_1 $} (N4)
      edge [bend right=7] node[below right, xshift=-3pt, yshift=2pt]{$m_1 \wedge i_1$}(N3)
      edge node[left] {$ \neg m_{1} \wedge i_1 $} (N0)
    ;
  \end{tikzpicture}
  }
\caption{The obligation monitor for $\protect\phi_1$ with one base station.}
\label{fig:minimalOM}
\end{figure}

\begin{example}
\Cref{fig:minimalOM} displays the obligation monitor for $\phi_1 =\LTLglobally(m_1 \rightarrow \LTLnext \LTLeventually i_1)$ of our running example with one base station.
The monitor starts in an obligation free state corresponding to the state before the system is started and contains one direction for every element of $2^{AP}$.
Note that we denote directions symbolically.
Whenever $m_1$ is received on an edge, the obligation $\LTLnext \LTLeventually i_1$ is raised.
From the $\LTLnext \LTLeventually i_1$ state, we differentiate between $m_1$ and $\neg m_1$ leading to another raise of the $\LTLnext \LTLeventually i_1$ obligation together with $\LTLeventually i_1$ or only $\LTLeventually i_1$ respectively.
Returning to the obligation $\LTLtrue$ is only possible if $i_1$ is set to $\LTLtrue$ and $m_1$ is $\LTLfalse$ in the same step.
\end{example}
Note that an offset between initial system and obligation monitor is created.
While transitions of the initial system consider environment inputs and states correspond to system outputs, elements of the state space of the obligation monitor are formulas and the transitions are defined by inputs and outputs combined.
Residing in a state in the obligation monitor can be interpreted as taking a transition in the system and not yet reaching the next state.
\Cref{fig:minimalOM} shows a monitor for a specification, where the implementation is unknown during construction and the obligation monitor over-approximates the reachable states of the implementation.
One can limit the reachable states of the monitor to the paths in the transition system.
Indeed, in regard of completeness, unreachable obligations need to be eliminated from the obligation monitor during verification.

\section{Model Checking and Synthesis}\label{sec:model-checkingandsynthesis}
In this section we solve the problems of model checking live updates and synthesis of live updates, i.e., live synthesis.
We explore finite trace and universal updates for the problems and show the complexity of each result and multiple parameters.

\subsection{Model Checking Live Updates}
Model checking a transition system TS against an LTL formula $\phi$ corresponds to answering the question if TS satisfies $\phi$, i.e., $TS \vDash \phi$. 
For live systems, the evaluation of the update transition system starts with the initial finite execution and switches to the update system afterwards.
Model checking the update system is therefore a language inclusion check of the traces of the transition system combined with $\trace$ against the \liveltl semantics.
\begin{definition}[Model Checking Finite Trace Live Updates]\label{def:modelcheckingfinitetrace}
    Let $TS_U$ be an update system, $\phi$ be an initial specification, $\psi$ be an update specification, and $\trace$ be a finite trace.
    The problem of model checking finite trace live updates is defined as 
    $\trace \cdot Traces(TS_U) \subseteq \text{Words}(\phi,\psi, \trace).$
\end{definition}
The model checking problem can be split into two separate parts, directly identifying the newly introduced conditions for live systems with the operators $\vDashInitial$ and $\vDashUpdate$.
In addition to that, $TS_U$ combined with $\trace$ needs to satisfy the update semantics of \liveltl.
Since both tasks can possibly be performed in isolation of each other, the overhead given by the live update semantics under the assumption of an update system already verified with LTL is an interesting topic but left open for future work.
The complexity of the problem is stated w.r.t. the length of the trace and the combination of initial and update formula:
\begin{theorem}[Complexity in $\phi$, $\psi$, and $\eta$]
    The model checking problem for finite trace live updates is \textsc{PSPACE}-complete in $|\phi| + |\psi|$ and in \textsc{NL} in $\eta \cdot TS_U$.
\end{theorem}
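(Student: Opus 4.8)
\noindent\emph{Proof idea.} The plan is to reduce finite-trace live-update model checking to two independent instances of ordinary LTL model checking and to read off both bounds from that reduction. Recall that a word $\eta\cdot\sigma$ lies in $\text{Words}(\phi,\psi,\eta)$ iff $\eta\cdot\sigma,0\vDashInitial\phi$ and $\eta\cdot\sigma,0\vDashUpdate\psi$, and that every word in $\text{Words}(\phi,\psi,\eta)$ carries $\eta$ as a prefix. Hence the inclusion $\eta\cdot\text{Traces}(TS_U)\subseteq\text{Words}(\phi,\psi,\eta)$ is a statement purely about the suffixes $\sigma\in\text{Traces}(TS_U)$, and I would treat the two conjuncts separately.

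First I would dispatch the update conjunct. By definition $\eta\cdot\sigma,0\vDashUpdate\psi$ iff $\eta\cdot\sigma,|\eta|\vDash\psi$, and the suffix of $\eta\cdot\sigma$ from position $|\eta|$ is exactly $\sigma$; thus this conjunct is equivalent to the ordinary LTL model-checking condition $\text{Traces}(TS_U)\subseteq\text{Words}(\psi)$. For the initial conjunct I would use the progression machinery behind the obligation monitor $\obligationmonitor_\phi$: processing $\eta$ with $\af$ yields $\af(\phi,\eta)$, and, because $\vDashInitial$ discharges every $\LTLrelease$-obligation at the horizon $|\eta|$, the residual obligation after the prefix is exactly $\evolve(\eta,\phi)=\stripoperator(\af(\phi,\eta))$. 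Concretely, I would prove by induction on $|\eta|$ (mirroring the $\vDashInitial$ clauses) that, provided the run of $\af$ over $\eta$ never reaches $\LTLfalse$, we have $\eta\cdot\sigma,0\vDashInitial\phi$ iff $\sigma,0\vDash\evolve(\eta,\phi)$; and that if the run does reach $\LTLfalse$ then $\text{Words}(\phi,\psi,\eta)=\emptyset$, so the inclusion fails because $\text{Traces}(TS_U)\neq\emptyset$ (transition systems have no terminal states). In the non-degenerate case the whole problem collapses to the single LTL query $\text{Traces}(TS_U)\subseteq\text{Words}\big(\evolve(\eta,\phi)\wedge\psi\big)$, where $\evolve(\eta,\phi)$ is co-safety, consistent with the Obligation-Property classification.

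From this reduction both bounds follow by the classical automata-theoretic method, used as a black box. For the \textsc{PSPACE} upper bound in $|\phi|+|\psi|$ I would \emph{not} materialise $\evolve(\eta,\phi)$ (which may be large); instead I would search for a violating suffix $\sigma\in\text{Traces}(TS_U)$ on the fly while maintaining (i) the $\af$-state of $\phi$, representable in $O(|\phi|)$ space, and (ii) a state of a B\"uchi automaton for $\neg\psi$ of size $O(|\psi|)$, checking non-inclusion by a nondeterministic polynomial-space search; $\mathrm{NPSPACE}=\textsc{PSPACE}$ and closure under complement close this direction. For \textsc{PSPACE}-hardness I would restrict to $\phi=\LTLtrue$ and $\eta=\epsilon$: then $\text{Words}(\LTLtrue,\psi,\epsilon)=\text{Words}(\psi)$, so the instance is exactly LTL model checking of $TS_U$ against $\psi$, which is already \textsc{PSPACE}-hard in $|\psi|$. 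For the \textsc{NL} bound in the combined size of $\eta$ and $TS_U$ I would fix $\phi$ and $\psi$, so that $\obligationmonitor_\phi$ and the automaton for $\neg\psi$ have constant size: walking $\eta$ through $\obligationmonitor_\phi$ to obtain $\evolve(\eta,\phi)$ (one of constantly many obligations) and to detect the $\LTLfalse$-case is a logspace pass over $\eta$, after which non-inclusion is nonemptiness of the constant-size product automaton with $TS_U$, solvable by reachability of an accepting lasso in \textsc{NL}; closure of \textsc{NL} under complement ($\textsc{NL}=\mathrm{coNL}$) then yields the inclusion test.

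The main obstacle is the correctness of the initial-conjunct reduction, i.e.\ the inductive claim that $\stripoperator(\af(\phi,\eta))$ faithfully captures $\vDashInitial$. The delicate points are operators whose scope straddles the horizon $|\eta|$—in particular nested $\LTLrelease$ and $\LTLuntil$ subformulas—where one must verify that truncating release lookahead exactly at $|\eta|$ corresponds to replacing the still-pending release residuals by $\LTLtrue$ at the boundary, while the $\LTLuntil$-obligations are transferred unchanged into $\evolve(\eta,\phi)$. Everything else is a routine transcription of the standard LTL model-checking complexity through the reduction.
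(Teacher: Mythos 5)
Your decomposition of the inclusion into the two conjuncts, and the semantic core of your argument, are sound: progressing $\phi$ through $\trace$ with $\af$ and replacing the still-pending $\LTLrelease$-residuals by $\LTLtrue$ at the horizon does capture $\vDashInitial$ exactly, since under $\vDashInitial$ every release evaluated at a position $\geq |\trace|$ is vacuously satisfied. This mirrors the machinery the paper itself uses elsewhere (the labeling $\stripoperator$ of the obligation monitor, and the $\stripoperator(expand_{|\trace|}(\phi))$ construction in the proof that \liveltl{} and LTL are equally expressive), and your hardness argument ($\phi = \LTLtrue$, $\trace = \epsilon$) and your fixed-formula \textsc{NL} argument are fine. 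Note, however, that the paper's own route is different: it does not split the check, but model-checks the single concatenated structure $\eta \cdot TS_U$ against \emph{one} formula of size linear in $|\phi| + |\psi|$, obtained by cutting off the release obligations of $\phi$ at the update point and spawning $\psi$ there (the appendix elaborates this transformation, via a fresh proposition $update$ and an $LTL_f$-style reduction, for the universal case).

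The genuine gap is in your \textsc{PSPACE} upper bound: the claim that ``the $\af$-state of $\phi$ is representable in $O(|\phi|)$ space'' is false. The residual $\af(\phi, \trace[0,k])$ is a Boolean combination over the closure of $\phi$; even as a DAG its representation grows with $k$, and up to propositional equivalence the reachable residuals form the state space of a \emph{deterministic} automaton, so there can be doubly exponentially many of them and an individual residual can require $2^{\Omega(|\phi|)}$ bits --- this is exactly why the obligation monitor can blow up. With the honest bound (DAG size $O(|\trace| \cdot |\phi|)$), your nondeterministic search still runs in space polynomial in the whole input, which salvages membership in \textsc{PSPACE} in the combined sense, and your separate constant-formula algorithm salvages the \textsc{NL} claim; but no single algorithm of yours achieves space polynomial in $|\phi| + |\psi|$ and simultaneously logarithmic in $\eta \cdot TS_U$, and the $O(|\phi|)$ claim as written is simply wrong. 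The repair is precisely the paper's move: never track residuals at all, but interpret a formula of size linear in $|\phi| + |\psi|$ (releases of $\phi$ bounded at the update position, $\psi$ shifted to it) over the combined structure $\eta \cdot TS_U$, and then invoke standard LTL model checking, which is polynomial-space in the formula and logarithmic in the structure at the same time, yielding both stated bounds with one construction.
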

The proof is based on model checking the combination of $\eta$ and $TS_U$.
The universal live update is verified independently of specific initial traces.
The condition is stronger than for finite trace updates, and the number of compatible initial and update systems is smaller.
Given that the context is unknown, the executions starting in the initial state of $TS_U$ need to satisfy every possible open obligation.
Universal updates are relevant if neither the trace nor the obligation monitor are stored and computed respectively.
Given the initial system, model checking universal update compatibility obtains the same complexity as finite trace updates.
\begin{definition}[Model Checking Universal Live Updates]\label{def:modelcheckinguniversal}
   Let $TS_I$ be an initial system, $TS_U$ be an update system, $\phi$ be an initial specification, and $\psi$ be an update specification.
   The problem of model checking universal live updates is defined as
   $\forall \trace \in \text{FinTraces}(TS_I): \trace \cdot Traces(TS) \subseteq \text{Words}(\phi,\psi, \trace).$
\end{definition}
The implicit update points in $TS_I$ allow for the connection of both transition systems and model checking with a linearly increased formula.
\begin{theorem}[Complexity in $\phi + \psi$, and $TS_I \cdot TS_U$]\label{th:modelcheckinguniversal}
    The~model~checking problem for universal live updates is \textsc{PSPACE}-complete in $|\phi|+|\psi|$ and \textsc{NL} in $TS_I \cdot TS_U$.
 \end{theorem}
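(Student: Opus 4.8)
The plan is to reduce universal live-update model checking to a single instance of standard LTL model checking, so that the known bounds (PSPACE-complete in the formula, \textsc{NL} in the system) transfer directly, exactly as in the finite-trace case but with the fixed prefix $\eta$ replaced by a structure that ranges over all prefixes of $TS_I$. Two ingredients are needed: a combined transition system whose traces realize every pair $(\eta,\sigma)$ with $\eta\in\mathit{FinTraces}(TS_I)$ and $\sigma\in\mathit{Traces}(TS_U)$, and a single LTL formula that captures the $\vDashInitial$/$\vDashUpdate$ semantics of $\mathrm{Words}(\phi,\psi,\eta)$ on such traces.

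First I would build a combined system $TS_C$ over $AP\cup\{\mathit{pre}\}$ as the disjoint union of $TS_I$ (the initial mode) and $TS_U$ (the update mode), where $\mathit{pre}$ labels exactly the initial-mode states. From every initial-mode state there is, in addition to the $TS_I$-transitions, a transition into the initial state of $TS_U$ (modeled either nondeterministically or via an extra update input, which is irrelevant for the complexity); once in update mode the run follows $TS_U$ forever. By construction $|TS_C|=O(|TS_I|+|TS_U|)$, and its traces are precisely the words $\eta\cdot\sigma$ with $\eta$ a finite trace of $TS_I$ and $\sigma\in\mathit{Traces}(TS_U)$, each carrying a marker $\mathit{pre}$ that is true on exactly the first $|\eta|$ positions. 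Crucially, the marker records $|\eta|$, so the prefix $\eta$ associated with a combined trace is determined, which is what lets the per-$\eta$ containment of \Cref{def:modelcheckinguniversal} be checked uniformly.

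The second step is the translation. Define $T$ homomorphically on all LTL operators except release, and on release set $T(\phi_1\LTLrelease\phi_2)=T(\phi_2)\LTLuntil(\neg\mathit{pre}\vee(T(\phi_1)\wedge T(\phi_2)))$; the single clause $\neg\mathit{pre}$ discharges the obligation exactly when the initial phase ends, which I would verify by induction reproduces the bounded-release clause of $\vDashInitial$ (the $i\ge|\eta|$ case is vacuous on both sides, and for $i<|\eta|$ the first position satisfying the right-hand disjunct is either the first release point in $[i,|\eta|)$ or $|\eta|$ itself). Setting $\chi=T(\phi)\wedge(\mathit{pre}\LTLuntil(\neg\mathit{pre}\wedge\psi))$, the second conjunct anchors $\psi$ at the first update position and thereby expresses $\vDashUpdate\psi$, while $T(\phi)$ expresses $\vDashInitial\phi$; hence a marked trace $\eta\cdot\sigma$ satisfies $\chi$ under standard LTL iff $\eta\cdot\sigma\in\mathrm{Words}(\phi,\psi,\eta)$. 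Since $|\chi|$ is linear in $|\phi|+|\psi|$ (sharing the duplicated subformulas), the universal problem is equivalent to the standard check $\mathit{Traces}(TS_C)\subseteq\mathrm{Words}(\chi)$, placing it in \textsc{PSPACE} in $|\phi|+|\psi|$ and in \textsc{NL} in $|TS_C|=O(|TS_I|+|TS_U|)$.

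For the matching lower bounds I would reduce from standard LTL model checking: taking $TS_I$ a single self-looping state and $\phi=\LTLtrue$ collapses the condition to $TS_U\vDash\psi$, giving \textsc{PSPACE}-hardness in $|\phi|+|\psi|$, and the usual reduction from graph reachability gives \textsc{NL}-hardness in the combined system. I expect the main obstacle to be the release case of the correctness proof for $T$: one must show that the single marker $\mathit{pre}$ faithfully simulates the index bound $|\eta|$ uniformly across all update points and under nesting, taking care that the unbounded \emph{until} operators of $\phi$ (whose eventualities may legitimately be discharged after the update) are left untouched while only release obligations are capped at the phase boundary.
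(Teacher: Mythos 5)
Your construction follows essentially the same route as the paper's proof: a combined transition system in which a fresh atomic proposition marks the update point and a nondeterministic switch into $TS_U$ ranges over all possible prefixes, a formula translation that caps every \emph{release} at the phase boundary (the paper does the same, following an $LTL_f$-style reduction in which every release is ``limited to $\mathit{update}$''), a conjunct anchoring $\psi$ at the first post-update position, linear size of both the combined system and the combined formula, and the same lower bound obtained by trivializing $\phi$ and $TS_I$. Your release clause $T(\phi_1 \LTLrelease \phi_2)=T(\phi_2)\LTLuntil(\neg\mathit{pre}\vee(T(\phi_1)\wedge T(\phi_2)))$ and the anchor $\mathit{pre}\LTLuntil(\neg\mathit{pre}\wedge\psi)$ do faithfully reproduce $\vDashInitial$ and $\vDashUpdate$ on traces where $\mathit{pre}$ holds on exactly a finite prefix. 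However, there is one genuine flaw. Your claim that the traces of $TS_C$ ``are precisely the words $\eta\cdot\sigma$'' is false: because the switch is nondeterministic and transition systems have no terminal states, $TS_C$ also contains the traces of $TS_I$ that never switch, on which $\mathit{pre}$ holds forever. Every such trace violates $\chi$, since the conjunct $\mathit{pre}\LTLuntil(\neg\mathit{pre}\wedge\psi)$ forces $\neg\mathit{pre}$ to occur eventually. Consequently the check $\mathit{Traces}(TS_C)\subseteq\mathrm{Words}(\chi)$ fails on \emph{every} nonempty instance, and your reduction as stated does not decide the problem of \Cref{def:modelcheckinguniversal}.

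The repair is exactly the guard that the paper builds in: there, $\mathit{update}$ is controlled by the environment and the reduced formula is only required under the environment assumption $\LTLeventually\mathit{update}$ (``a formula that is co-safe if the environment assumption $\LTLeventually update$ holds''). In your notation, replace $\chi$ by $(\LTLglobally\,\mathit{pre})\vee\chi$, equivalently $\LTLeventually\neg\mathit{pre}\rightarrow\chi$, so that never-switching traces are exempted; the universal quantification over finite prefixes in \Cref{def:modelcheckinguniversal} concerns only traces that do switch, and on those your marker determines $|\eta|$ uniquely, so the equivalence you want then holds. The guard adds only constant size, so the \textsc{PSPACE} and \textsc{NL} upper bounds, and your hardness arguments, go through unchanged.
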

The complexity results from encoding the live update in the combined transition system $TS_I \cdot TS_U$ and an adapted formula.
Based on the model checking results we introduce live synthesis, the major contribution of this paper.

\subsection{Live Synthesis}\label{sec:livesynthesis}
In this section, we introduce the problem of live synthesis and show the complexity of synthezising live systems.
Synthesis of live updates during the runtime of the initial system promises correct-by-definition updates that can substitute the executed system instantaneously.
In contrast to model checking, the synthesis procedure returns an implementation or \textit{unrealizable}, proving that the finite trace or initial system and initial specification are incompatible with the update specification.
We begin with live updates for an explicit finite trace of the initial system -- the update system needs to react to the explicit context and open obligation.
The definition follows the model checking problem, but searches for a transition system satisfying the live update.
\begin{definition}[Finite Trace Live Synthesis]\label{def:finitetracelivesynthesis}
    Let $\phi$ be an initial specification, $\psi$ be an update specification, and $\trace$ be a finite trace.
    The finite trace live synthesis problem is the computation of a transition system $TS$ s.t. 
    $\trace \cdot \text{Traces}(TS) \subseteq \text{Words}(\phi,\psi, \trace).$
\end{definition}
We additionally call a live update \textit{realizable} if there exists a transition system that satisfies the finite trace live update.
The complexity of the update synthesis is expressed w.r.t. $\phi$
 and $\psi$ and aligns to existing LTL synthesis bounds.
 \begin{theorem}[Complexity in $\phi$ and $\psi$]
    The finite trace live synthesis problem is \textsc{2EXPTIME}-complete in $|\phi|$ and $|\psi|$.
\end{theorem}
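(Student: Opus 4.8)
The plan is to reduce finite trace live synthesis to standard LTL synthesis over the language of the $\sigma$-tails, thereby inheriting both the doubly-exponential upper bound and the matching lower bound. Observe first that the synthesis condition $\trace \cdot \text{Traces}(TS) \subseteq \text{Words}(\phi,\psi,\trace)$ is equivalent to $\text{Traces}(TS) \subseteq L$, where $L = \{\sigma \mid \trace\cdot\sigma,0 \vDashInitial \phi\} \cap \{\sigma \mid \trace\cdot\sigma, 0 \vDashUpdate \psi\}$ is a language over the tails $\sigma$ with the fixed input/output partition. Hence it suffices to build a deterministic automaton for $L$ and solve the induced synthesis game; one could alternatively route through \Cref{th:liveltlltl}, but the direct automaton construction is what yields the tight bound.

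For the upper bound I would treat the two conjuncts separately. The update conjunct is immediate: by definition of $\vDashUpdate$ the shift of size $|\trace|$ lands exactly on the first letter of $\sigma$, so $\trace\cdot\sigma, 0 \vDashUpdate \psi$ holds iff $\sigma, 0 \vDash \psi$, contributing the plain LTL language $\text{Words}(\psi)$. For the initial conjunct I would push $\trace$ through the bounded-release after-function, i.e.\ compute the residual $\theta = \af(\phi,\trace)$ under $\vDashInitial$. Because the bounded-release clause discharges every $\LTLrelease$ at positions $\geq |\trace|$ (the guard $|\trace| > j \geq i$ becomes vacuous), the residual carries no live release obligation; it is precisely the co-safety formula obtained by applying $\stripoperator$ along the run, and its temporal subformulas are subformulas of $\phi$. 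Thus $\{\sigma \mid \trace\cdot\sigma, 0 \vDashInitial \phi\} = \text{Words}(\theta)$, and a deterministic automaton for $\theta$ has at most doubly exponentially many states in $|\phi|$. Taking the product with a deterministic automaton for $\psi$ yields an automaton for $L$ of size doubly exponential in $|\phi|+|\psi|$; solving the resulting parity game over the input/output partition is polynomial in the automaton and returns either a witnessing $TS_U$ or \emph{unrealizable}. The whole procedure runs in \textsc{2EXPTIME} in $|\phi|+|\psi|$.

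For the lower bound I would reduce ordinary LTL synthesis to our problem. Given an LTL formula $\psi$, instantiate finite trace live synthesis with initial specification $\phi = \LTLtrue$ and the empty trace $\trace = \epsilon$. Then $|\trace| = 0$, so $\trace\cdot\sigma, 0 \vDashUpdate \psi$ reduces to $\sigma, 0 \vDash \psi$ while $\trace\cdot\sigma, 0 \vDashInitial \LTLtrue$ holds trivially, giving $\text{Words}(\LTLtrue,\psi,\epsilon) = \text{Words}(\psi)$. A solution to this instance is exactly an implementation realizing $\psi$ under the input/output partition, so the \textsc{2EXPTIME}-hardness of LTL synthesis transfers. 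Note that the hardness is genuinely driven by the update specification: the initial conjunct only ever produces co-safety obligations, consistent with design aspect~(3), so the doubly-exponential cost stems from the full-LTL update part.

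The step I expect to be the main obstacle is the correctness and, above all, the size analysis of the residual $\theta$. I must argue carefully that evaluating $\phi$ under $\vDashInitial$ over $\trace\cdot\sigma$ coincides with evaluating the release-stripped residual under standard LTL on $\sigma$ --- in particular that no release obligation survives past position $|\trace|$ and that the non-release operators are unaffected by the bounded semantics. The subtlety is that the syntactic size of $\af(\phi,\trace)$ may grow with the trace length through nested Boolean combinations, so the argument must bound the automaton for $\theta$ by the temporal closure of $\phi$ (which is independent of $|\trace|$) rather than by the syntactic size of the residual itself. Once this is established, the product construction and the game solving are entirely standard.
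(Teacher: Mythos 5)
Your proof is correct, but it takes a genuinely different route from the paper's. The paper gives no standalone argument for this theorem: it declares it subsumed by the proof of \Cref{th:universallivesynthesis}, which relativizes every $\LTLrelease$ in $\phi$ to a fresh environment proposition $update$, builds the parity game for $\phi' \wedge \LTLeventually(update \wedge \LTLnext \psi)$ with the pre-update phase environment-controlled, and inherits the \textsc{2EXPTIME} bound from LTL synthesis through a reduction that is linear in $|\phi|$ and $|\psi|$; hardness comes from the embedding of LTL into \liveltl (\Cref{th:liveltlltl}). You instead exploit the fact that the trace is given explicitly: you split the \liveltl language into the update conjunct, which after the $|\trace|$-shift is exactly $\text{Words}(\psi)$, and the residual obligation $\theta = \stripoperator(\af(\phi, \trace))$, argue that $\theta$ is co-safety because the bounded-release clause discharges all releases at positions $\geq |\trace|$, and then run standard automata-based synthesis on the conjunction. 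This is essentially the paper's own obligation-monitor machinery (and the $\stripoperator(expand_{|\trace|}(\phi))$ construction from its proof of \Cref{th:liveltlltl}) repurposed as a synthesis procedure. Your route buys a sharper cost attribution --- the double-exponential blow-up is driven entirely by $\psi$, while the inherited obligations only contribute a co-safety automaton --- but it requires the closure-based size analysis you correctly flag (residuals must be counted modulo propositional equivalence, not syntactically, to avoid a dependence on $|\trace|$), and it does not lift to the universal problem, where no single $\trace$ is available. The paper's route sidesteps the size analysis entirely and covers both update variants with one construction, which is precisely why it proves the universal theorem and lets the finite-trace case follow. Your lower bound (instantiate $\phi = \LTLtrue$, $\trace = \epsilon$, so that $\text{Words}(\LTLtrue, \psi, \epsilon) = \text{Words}(\psi)$) coincides with the paper's hardness argument. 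One minor imprecision: solving the resulting parity game is not polynomial in the product automaton in general, since the number of priorities is exponential in $|\psi|$; but any standard parity algorithm still keeps the total cost doubly exponential, so the stated bound is unaffected.
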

The proof is subsumed by the proof of \Cref{th:universallivesynthesis}.
The universal update is again of interest if the context of the live update is unknown.
Synthesizing a transition system that satisfies the universal live update enables the user to plug-in the new system at any time-step without further analysis.
\begin{definition}[Universal Live Synthesis]\label{def:universallivesynthesis}
   Let $\phi$ be an initial specification, $TS_I$ be an initial system, and $\psi$ be an update specification.
   The universal live synthesis problem is the computation of a transition system TS s.t. 
   $\forall \trace \in \text{FinTraces}(TS_I): \trace \cdot \text{Traces}(TS) \subseteq \text{Words}(\phi,\psi, \trace).$
\end{definition}
Again, we call the problem of the existence of a solution realizability.
In general, the universal update obtains a conjunction of double exponentially many conjuncted obligations.
To avoid the expansion of the update system, we combine the parity games of the initial and update system.
Again, the initial formula conducts the impact on the update system and provides the complexity results.

\begin{theorem}[Complexity in $\phi$ and $\psi$]\label{th:universallivesynthesis}
    The universal update synthesis problem is \textsc{2EXPTIME}-complete in $|\phi|$ and $|\psi|$.
\end{theorem}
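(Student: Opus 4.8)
The plan is to establish matching bounds. For \textsc{2EXPTIME}-hardness I would reduce from plain LTL realizability, which is already \textsc{2EXPTIME}-complete. Given an LTL formula $\psi$ over $AP = I \cup O$, I instantiate the universal live synthesis problem with update specification $\psi$, initial specification $\phi = \LTLtrue$, and a trivial single-state initial system $TS_I$. Since $\LTLtrue$ contains no \LTLrelease-subformula, the $\vDashInitial$-conjunct of $\text{Words}(\LTLtrue,\psi,\trace)$ holds vacuously, and by the index shift defining $\vDashUpdate$ we get $\text{Words}(\LTLtrue,\psi,\trace) = \trace \cdot \text{Words}(\psi)$ for every $\trace$. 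Hence the requirement $\trace \cdot \text{Traces}(TS) \subseteq \text{Words}(\LTLtrue,\psi,\trace)$ collapses to $\text{Traces}(TS) \subseteq \text{Words}(\psi)$, uniformly in $\trace$, so the constructed instance is realizable exactly when $\psi$ is. The same instantiation yields the lower bound for finite trace live synthesis.

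For membership I would argue semantically. Because $\phi$ is in PNF and $\vDashInitial$ strips every release beyond position $|\trace|$, the property handed to the update system in a fixed context $\eta$ is exactly the co-safety obligation $\stripoperator(\af(\phi,\eta))$, i.e. the label $\eta$ reaches in the obligation monitor $\obligationmonitor_\phi$. The universal condition therefore forces $TS$ to satisfy, on top of $\psi$, the property $\Theta_\phi := \bigwedge \{\stripoperator(t) \mid t \text{ reachable in } TS_I \times \obligationmonitor_\phi\}$; as a conjunction of co-safety properties this is again co-safety, so universal live synthesis reduces to ordinary synthesis of $\psi \wedge \Theta_\phi$. I would then build a deterministic parity automaton $\mathcal{A}_\psi$ for $\psi$ (doubly exponential in $|\psi|$, with at most exponentially many colors) and keep the obligation side as the single deterministic monitor $\obligationmonitor_\phi$ (doubly exponential in $|\phi|$), whose reachability of the $\LTLtrue$-labeled state encodes discharge of the tracked obligation. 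Taking the synchronous product of the arena induced by $\mathcal{A}_\psi$ with $TS_I \times \obligationmonitor_\phi$ gives one parity game whose size is doubly exponential in $|\phi| + |\psi|$; solving it takes time polynomial in the number of states and exponential in the number of colors, i.e. \textsc{2EXPTIME} overall, and a winning strategy of the system player is the sought update system (its absence certifies unrealizability).

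The crux of the argument, and the step I expect to be the main obstacle, is to carry out this combination without paying a further exponential. Spelling $\Theta_\phi$ out as an LTL formula is fatal: there are up to doubly exponentially many reachable obligation states, so the conjunction is doubly exponential and its automaton would become triply exponential, and an explicit subset construction over $\obligationmonitor_\phi$ would do the same. The point is to combine \emph{games} rather than \emph{formulas}, folding obligation discharge directly into the acceptance condition of the product so that the doubly exponential size of the monitor, not the triply exponential size of an explicit conjunction, governs the bound. The delicate part is then to show that the resulting strategy is robust against all contexts \emph{simultaneously} and implementable as a single context-independent system: since the monitor is deterministic and every obligation is co-safety, all reachable discharge requirements coincide with the one property $\Theta_\phi$, so a system satisfying $\psi \wedge \Theta_\phi$ needs no knowledge of the actual update point, and uniform positional determinacy of the parity game supplies one such system. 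Finally, finite trace live synthesis is the special case in which $TS_I \times \obligationmonitor_\phi$ contributes the single obligation $\stripoperator(\af(\phi,\eta))$, so its \textsc{2EXPTIME} bound follows as an immediate corollary.
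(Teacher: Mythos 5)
Your hardness argument is correct and is essentially the paper's: the paper invokes its expressiveness theorem (\Cref{th:liveltlltl}), whose LTL-to-\liveltl direction is exactly your instantiation $\phi' = \LTLtrue$, $\eta = \epsilon$. Your semantic reduction for the upper bound is also sound, and on the one point where precision matters it is sharper than the paper's sketch: \Cref{def:universallivesynthesis} asks for a single, context-independent system, and this is captured exactly by your characterization that $TS$ is a universal update iff $TS$ realizes $\psi \wedge \Theta_\phi$, where $\Theta_\phi$ is the conjunction of the obligation labels reachable in $TS_I \times \obligationmonitor_\phi$ (incidentally, this is precisely what the paper's prototype implements in \Cref{sec:casestudy}). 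The genuine gap sits at the step you yourself flag as the crux. The object you name --- ``the synchronous product of the arena induced by $\mathcal{A}_\psi$ with $TS_I \times \obligationmonitor_\phi$'' --- does not exist as described, because $TS_I \times \obligationmonitor_\phi$ processes the pre-update trace while $\mathcal{A}_\psi$ processes the post-update trace. The only coherent reading is a sequential, two-phase game: the environment drives $TS_I$ and the monitor, chooses the update point, and the system must then satisfy $\psi$ while driving the inherited monitor state to a $\LTLtrue$-labeled one. That game decides a strictly weaker problem, since its winning strategies may condition on the monitor state at the update point, i.e., on the context; positional determinacy does not remove this dependence, because the vertex at the update point already encodes the obligation. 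Concretely, take $\phi = \LTLglobally\bigl(r \rightarrow ((x \rightarrow \LTLeventually a) \wedge (\neg x \rightarrow \LTLeventually b))\bigr)$, let $TS_I$ answer every request in the immediately following state (so at most one of $\LTLeventually a$, $\LTLeventually b$ is ever pending), and let $\psi = \LTLglobally \neg a \vee \LTLglobally \neg b$. Every finite trace update is realizable, so the two-phase game is won; but $\Theta_\phi \equiv \LTLeventually a \wedge \LTLeventually b$, and $\psi \wedge \Theta_\phi$ is unrealizable, so no universal update exists. A procedure built on that game wrongly reports ``realizable''.

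What is missing is the deterministic object that tracks the simultaneous discharge of \emph{all} reachable obligations along the update trace. $\obligationmonitor_\phi$ cannot play this role: its states are residuals of the single formula $\phi$, not of the conjunction $\Theta_\phi$, and the subset construction you rightly reject is indeed fatal. The repair is to use $TS_I \times \obligationmonitor_\phi$ purely offline to collect the reachable labels, form $\Theta_\phi$, and build the $\af$-closure automaton of $\Theta_\phi$ modulo propositional equivalence, accepting once the residual becomes $\LTLtrue$ (legitimate because $\Theta_\phi$ is release-free, hence syntactically co-safe). The size bound then needs a lemma you do not state: every residual of $\Theta_\phi$ is a positive Boolean combination of release-stripped subformulas of $\phi$, and modulo propositional equivalence there are at most doubly exponentially many such combinations; hence the tracker stays doubly exponential in $|\phi|$ even though $\Theta_\phi$ written out has doubly exponentially many conjuncts. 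With that lemma, the product of this tracker with the deterministic parity automaton $\mathcal{A}_\psi$ is a doubly exponential parity game with exponentially many colors, and your \textsc{2EXPTIME} conclusion (and the finite-trace corollary) goes through. Note that the paper's own proof takes a genuinely different route that never materializes $\Theta_\phi$: it relativizes the release operators of $\phi$ to a fresh environment-controlled signal $update$, builds the parity game for the linear-size formula $\phi' \wedge \LTLeventually(update \wedge \LTLnext \psi)$, restricts the pre-update phase to the edges of $TS_I$ under environment control, and inherits the bound from LTL synthesis --- but, read literally, that sketch is also a two-phase game and leaves the context-independence issue above implicit, so mimicking it would not by itself close your gap.
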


\begin{proof}[Sketch]
	The hardness proof follows from \Cref{th:liveltlltl}.
	To show the completeness, we sketch the reduction from \liveltl to LTL. 
	Let $\phi'$ be $\phi$ with release formulas limited to the environment AP $update$. 
	We build the parity game of $\phi' \wedge \LTLeventually(update \wedge \LTLnext \psi)$ (cf.~\cite{paritygames}), where $update$ is enforced to only occur once but will eventually hold.	
	We introduce the following changes to the game: The first part of the game ($\phi'$) is restricted to the edges that can be taken in $TS_I$ and all edges are controlled by the environment. 
	Therefore the environment can move arbitrarily in the first game and build any obligation possible.
	Solving the game synthesizes a universal update for the triple $TS_I,\phi, \psi$ regarding the \liveltl semantics.
	Since the reduction is linear in $|\phi|$ and $|\psi|$, we obtain the complexity results from LTL for $\phi$ and $\psi$.
\end{proof}

\section{Case Study}\label{sec:casestudy}
\label{sec:casestudy}
We explore the live update problems on benchmarks from the reactive synthesis competition~\cite{SYNTCOMP} and robot control communities~\cite{RobotSpecifications}.
Our goal is a qualitative analysis of pairs of specifications that can be updated live according to the finite trace live update and the universal live update.
In more detail, we aim to answer the following questions: For specifications that can potentially be updated to each other, does the \liveltl semantics state universally updatable obligations? 
And if not, in how many obligation states is a finite trace update possible?

\sloppy

A prototype for the live synthesis procedure is implemented on top of \textsc{BoSy}~\cite{bosy}, a tool that synthesizes implementations for LTL formulas\footnote{The prototype and experiments are available online at \href{https://github.com/reactive-systems/LiveSynthesisArtifact}{https://github.com/reactive-systems/LiveSynthesisArtifact}.}.
We use \textsc{Spot}~\cite{spot} for LTL formula manipulation and implemented the obligation monitor construction for arbitrary LTL formulas.
For our experiments, the following structure is used: \textsc{BoSy} synthesizes a system for the initial specification which is used to build the obligation monitor.
Therefore, the result of the synthesis query, i.e., a transition system satisfying the formula, is parsed and cut with the obligation monitor to eliminate unreachable states.
Since the result of \textsc{BoSy} may differ per execution, we may obtain different sizes of the obligation monitor for different benchmark runs.
Based on the obligation monitor, we perform explicit trace live synthesis for every monitor state label and universal live synthesis for all monitor states combined. 
Therefore, we build the conjunction of obligation formula and update formula and execute \textsc{BoSy} to check realizability.

\fussy

For the benchmarks in \Cref{sec:benchmarkfamilies}, \Cref{tab:benchmarks} shows multiple results: The number of obligation monitor states built by the initial system and specification, the number of finite trace updates that are realizable, and the result of the universal update.
Despite the finite trace live update stating updates from every possible finite execution of the initial system, we use the state representation of the obligation monitor to symbolically represent every execution.
The runtime in seconds for the update specification without update constraints and the universal update conclude the table.
All experiments were executed on an Intel i7 processor with 2,8 GHz and 16 GB RAM.
\subsection{Benchmark Families}\label{sec:benchmarkfamilies}
The upper part of \Cref{tab:benchmarks} shows the results for live updates from specification patterns introduced by Menghi et. al.~\cite{RobotSpecifications}, where \textbf{Reactivity} implements additional interaction with the environment.
The specifications define the behavior of a robot that is able to travel between n different locations and needs to satisfy different specifications on the way.
Our second set of benchmarks is taken from the annual synthesis competition \textsc{SYNTCOMP}~\cite{SYNTCOMP}.
The results for live updates in the reactive synthesis setting are shown in the lower part of \Cref{tab:benchmarks}. 
\begin{itemize}
	\item \textbf{Visit}, \textbf{Seq. Visit}, and \textbf{Patrolling} enforce the robot to visit every location once, in a sequence, and infinitely often respectively.
	\item \textbf{Reactivity}. The reactivity specification forces the robot to react to an event after two steps at latest by driving to a delineated location, e.g., for refueling. The Reactivity specification can be added to arbitrary specifications.
	\item \textbf{Relay Station}. The running example of this paper. The relay station communicates with $n$ satellites and forwards the message if clients acknowledged.
	\item \textbf{Arbiter}. An arbiter controls the access of multiple clients to a shared resource. It ensures that every request to the resource is eventually granted. We consider three variants of arbiter, a simple arbiter (\textbf{s}) only iterating over grants, a full arbiter (\textbf{f}) only granting access if requested beforehand, and a prioritized arbiter (\textbf{p}) that prioritizes the requests of client 0.
	\item \textbf{ABP}. The alternating bit protocol consists of a receiver \textbf{ABPReceiver} and a transmitter \textbf{ABPTransmitter} specifying the data link layer in the OSI communication network.
	\item \textbf{Load Balancer}. The load balancer distributes workload over $n$ worker.
\end{itemize}
In addition to the specifications, we denote updates with an increased parameter with \textbf{$n \rightarrow n+1$}.
This property is of interest if the parameter may change during the execution, e.g., increasing the number of clients of an arbiter.
\begin{table}[t]
	\centering
	\resizebox{\textwidth}{!}{
	\begin{tabular}{c|c|c|c|c|c|c}
		\hline
		\multicolumn{7}{c}{\textit{Robot Specification Patterns}}\\
		\hline
		\textit{Ben.} & \textit{Update} & \textit{\#OM-States} &  \textit{\#Fin. Trace} & \textit{Universal} & \textit{Time} $\psi$& \textit{Time Univ.}  \\
		\hline
		{Visit} & Seq. Visit &   4 &  4 & \textit{real.} & 0.75&0.75\\
		 & Patrolling & 6 & 6 & \textit{real.} & 0.68 & 0.68\\
		 & Seq. Patrolling & 6 & 6 & \textit{real.} & 0.64 & 0.72\\
		 & Reactivity & 7 & 7 & \textit{real.} & 0.49 & 0.49\\
		 \hline
		 Seq. Visit & Patrolling & 14 & 14 & \textit{real.} & 0.56& 0.59\\
		 & Seq. Patrolling & 16 & 16 & \textit{real.} & 0.57 & 0.59\\
		 & Reactivity & 5 & 5 & \textit{real.} & 0.44& 0.44\\ 
		\hline
		Patrolling  & Ord. Visit  & 6 & 6 & \textit{real.} & 0.61 & 0.67 \\
		
		& Reactivity  & 7 & 7 & \textit{real.} & 0.49 & 0.52\\
		\hline
		\multicolumn{7}{c}{\textit{SYNTCOMP}}\\
		\hline
		Relay Station & 1 $\rightarrow$ 2 & 4 &  4 & \textit{real.} & 16.26 & 17.23\\
		& 2 $\rightarrow$ 1 & 19 & 19 & \textit{real.} & 0.61 & 0.61 \\
		\hline
		 Arbiter & 2f $\rightarrow$ 3f & 11 & 6 & \textit{unreal.} & 5.30 & - \\
		 & 2s $\rightarrow$ 2f & 4 & 2 & \textit{unreal.} & 0.56 & - \\
		 & 2s $\rightarrow$ 4s & 4 & 4 & \textit{real.} & 0.69 & 0.79 \\ 
		 & 2s $\rightarrow$ 2p & 13 & 13 & \textit{real.} & 0.46 & 0.48\\ 
		 & 2f $\rightarrow$ 2p &  10 & 10 & \textit{real.} & 0.45 & 0.52\\ 
		 & 2p $\rightarrow$ 3p &  6& 6 & \textit{real.} & 0.65 & 0.74 \\ 
		\hline
		ABPReceiver & 1 $\rightarrow$ 2 & 5 & 4  & \textit{unreal.} & 0.55 & - \\
		 & 2 $\rightarrow$ 3 & 9 & 3& \textit{unreal.} & 0.43 & - \\
		\hline
		ABPTransmitter & 1 $\rightarrow$ 2 &  5 & 5 & \textit{real.} & 2.70 & 2.82\\
		\hline 
		Load Balancer &  2 $\rightarrow$ 4 & 7 & 7 & \textit{real.} & 0.72 & 0.75 \\
	\end{tabular}}
	\vspace{0.2cm}
\caption{Results of Live Updates for Robot and SYNTCOMP specifications.}
\label{tab:benchmarks}
\vspace{-.8cm}
\end{table}

\subsection{Observations}
Throughout all experiments, the minor runtime overhead of the universal update synthesis shows that the additional cost for live update correctness is feasible.
The robot specifications provide insight of obligations raised during execution.
Since most of the benchmarks obtain the same structural behavior, i.e., the robot visits the locations under some restrictions, the universal live updates are realizable.
Even when adding requests, e.g., the robot has to refuel in two steps after requested, the live update is realizable by satisfying the open obligations after the update.
Changes to the visiting sequence or infinitely often reaching a location with patrolling increases the size of the obligation monitor (\textit{\#OM-States}) but does not lead to unrealizability.
Nevertheless, the sizes of the obligation monitors indicate that tracking the behavior of the system is necessary to obtain the correct obligation.
Altogether, our results show that although robot specifications raise obligations, synthesizing correct live updates is often feasible due to the absence of conflicts between the specifications.  
Most interestingly for the reactive systems benchmarks are arbiter live updates.
Changing a specification to a simple arbiter is realizable since the arbiter does not additionally restrict the behavior.
However, live updates to full arbiter are only possible from some obligation monitor states, shown by the difference of \textit{\#OM-States} and \textit{\#finite trace updates}.
Unrealizability follows from obligation states forcing a grant - an unrequested grant of the update system would be spurious.
Since the prioritized arbiter does not include non-spuriousness, a live update from and to this arbiter is realizable.
The relay station can be universally updated to the one more and one less base stations.
Once computed, the obligations can be satisfied in finite time-steps and synthesizing a solution that reacts to all obligations is possible.

The experiments answer the questions stated at the beginning of this section: Specifications that are meaningful live updates state obligations for the update system, shown by the large number of states of the obligation monitors.
Realizability of the update system depends on the restrictiveness of the specification, even if the universal update is unrealizable, our results show that in all benchmarks some finite trace live updates are realizable.

\section{Related Work}\label{sec:relatedwork}
The necessity of live updates in always-on systems is long known and was introduced as \cite{First,on-the-fly}. 
Dynamic updates for programming languages, e.g.,  in C++~\cite{C++} and Java~\cite{java}, enable developers to update \textit{dynamic classes} during runtime and are called \textit{dynamic software updates} (DSU). 
The proposed frameworks implement functionality and are unable to ensure temporal correctness of the updates.
Live kernel patches received huge attention in the operating system community \cite{operatingsystems2,operatingsystems}, where bug-fixes and features of the kernel can be deployed without reboot.
Recent work in live updates for operating systems achieved real-life implementations, e.g. for Linux \cite{operatingsystems3} and Android \cite{android} kernels. 
Implementations of dynamic updates raised the need for verification:
Following the idea of observability by the user, Hayden et. al.~\cite{DSU} introduce \textit{client-oriented specifications} (CO-specs) to define and verify against client-visible behavior.
Closest to our work are dynamic updates in controller verification and synthesis.
Ghezzi et.al.~\cite{controllersynthesisMSD} introduce a controller synthesis approach based on Modal Sequence Diagrams (MSD).
The update is a synthesized MSD that takes over the execution when a safe state is reached.
While reaching a safe state is also necessary in~\cite{controllersynthesis2}, the authors omit the obligations of the previous system.
Where \cite{controllersynthesisMSD} also relies on the existence of a safe state for the live update,~\cite{controllersynthesisLTL} also proves the reachability of the update state.
Therefore, the condition of the handover between the systems is defined as LTL specification.
The main difference is stating the correctness as LTL formula and not observing the update condition semantically from the initial formula.
\vspace{-0.1cm}

\section{Conclusion}\label{sec:conclusion}
We introduced live synthesis, a synthesis framework for dynamic updates in reactive systems.
We identified \textit{obligations} of a running system as the currently open \textit{co-safety} formulas and defined \liveltl to specify the correct handover between two systems.
The presented obligation monitor enables tracking of obligations during system execution and continuously shows the open obligations.
We explored synthesis and model-checking for two update problems, \textit{finite trace live updates} and \textit{universal update}, which consider full information and zero information of the currently open obligations respectively.
Our case study on robot specifications and reactive synthesis benchmarks show that it is necessary to verify live updates in \textit{always-on} systems and \textit{live synthesis} is able to automatically generate correct update systems if realizable.
We believe that live updates play a crucial role in \textit{high-availability} system verification and can benefit from existing techniques for reactive systems.

\bibliographystyle{splncs04} 
\bibliography{bibliography}

\appendix
\section{Additional Definitions}\label{app:Definitions}
\subsection{LTL Semantics}
\label{app:LTLSemantics}
Let $\sigma = A_0A_1A_2 \ldots$ with $A_i \in 2^{AP}$. $\sigma, i \vDash \varphi$ is defined as:
\begin{align*}
    \sigma, i & \vDash \top \\
    \sigma, i & \nvDash \bot \\
    \sigma, i & \vDash a  &&\text{ iff }\ A_i \vDash a, \text{i.e. } a \in A_i\\
         \sigma, i   &  \vDash    \neg \phi   &&\text{ iff }\ \sigma, i \nvDash \phi  \\
    \sigma, i  & \vDash   \varphi_1 \wedge \varphi_2  &&\text{ iff }\ \sigma, i \vDash \varphi_1 \text{ and}\ \sigma, i \vDash \varphi_2 \\
     \sigma, i  &  \vDash   \LTLnext \varphi   &&\text{ iff }\ {\sigma, i+1 \vDash \varphi} \\
     \sigma, i  &  \vDash   \varphi_1 \LTLuntil \varphi_2   &&\text{ iff }\ \exists j, j \geq i.\ \sigma, j \vDash \varphi_2\ \text{and}\ \forall k,i\leq k < j.\ \sigma, k \vDash \varphi_1 \\
     \sigma, i  &  \vDash \phi_1 \LTLrelease \phi_2  &&\text{ iff }\ \forall j,{j \geq i.}\ \sigma, j \vDash \phi_2\ \text{or} \\
     	&&& \phantom{ iff }\exists k,{k \geq i.}\ (\sigma, k \vDash \phi_1 \wedge \forall l, i \leq l \leq k.\ \sigma, l \vDash \phi_2)
\end{align*}

\subsection{The Function $\af$}\label{app:after}
Let $\varphi$ be an LTL formula and $\nu \in 2^{AP}$. We define the function $\af(\phi, \nu)$ as follows:
\begin{align*}
	\af(\LTLtrue, \nu) &= \LTLtrue\\
	\af(\LTLfalse, \nu) &= \LTLfalse\\	
	\af(a, \nu) &= \twopartdef { \LTLtrue } {a \in \nu} {\LTLfalse} {a \notin \nu}\\
	\af(\neg a, \nu) &= \neg \af(a, \nu)\\
	\af(\phi \wedge \psi, \nu) &= \af(\phi, \nu) \wedge \af(\psi, \nu) \\
	\af(\phi \vee \psi, \nu) &= \af(\phi, \nu) \vee \af(\psi, \nu) \\
	\af(\LTLnext \phi, \nu) &= \varphi\\
	\af(\LTLglobally \phi, \nu) &= \af(\varphi, \nu) \wedge \LTLglobally \phi\\
	\af(\LTLfinally \phi, \nu) &= \af(\varphi, \nu) \vee \LTLfinally \phi\\	
	\af(\phi \LTLuntil \psi, \nu) &= \af(\psi, \nu) \vee (\af(\phi, \nu) \wedge \phi \LTLuntil \psi)\\			
	\af(\phi \LTLrelease \psi, \nu) &= (\af(\phi, \nu) \wedge \af(\psi, \nu)) \vee (\af(\psi, \nu) \wedge \phi \LTLrelease \psi)			
\end{align*}

\subsection{The Function $expand$}\label{app:expand}
Let $\varphi$ be an LTL formula. We define the function $expand(\phi)$ as follows:
\renewcommand{\af}{expand}
\begin{align*}
	\af(\LTLtrue) &= \LTLtrue\\
	\af(\LTLfalse) &= \LTLfalse\\	
	\af(a) &= a\\
	\af(\neg a) &= \neg a\\
	\af(\phi \wedge \psi) &= \af(\phi) \wedge \af(\psi) \\
	\af(\phi \vee \psi) &= \af(\phi) \vee \af(\psi) \\
	\af(\LTLnext \phi) &= \LTLnext \af(\varphi)\\
	\af(\LTLglobally \phi) &= \varphi \wedge \LTLnext \LTLglobally \phi\\
	\af(\LTLfinally \phi) &= \varphi \vee \LTLnext \LTLfinally \phi\\	
	\af(\phi \LTLuntil \psi) &= \psi \vee (\phi \wedge \LTLnext (\phi \LTLuntil \psi))\\			
	\af(\phi \LTLrelease \psi, \nu) &= (\phi \wedge \psi) \vee (\psi \wedge \LTLnext(\phi \LTLrelease \psi))			
\end{align*}

\section{Proofs}\label{app:liveltl:ltl}
\subsection{LiveLTL and LTL}\label{app:liveltl:ltl}
\begin{theorem*}
    \liveltl and LTL are equally expressive.
\end{theorem*}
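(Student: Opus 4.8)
The plan is to prove both inclusions. The direction from LTL into \liveltl is immediate: given any LTL formula $\chi$, choosing the empty trace $\trace = \epsilon$, initial specification $\LTLtrue$, and update specification $\chi$ yields $\text{Words}(\LTLtrue, \chi, \epsilon) = \text{Words}(\chi)$, since with $|\trace| = 0$ the relation $\vDashInitial$ applied to $\LTLtrue$ is vacuously satisfied and the index shift built into $\vDashUpdate$ is the identity. The substance of the theorem is the converse: for every triple $(\phi, \psi, \trace)$ with $n = |\trace|$ I would construct a single LTL formula $\chi$ with $\text{Words}(\chi) = \text{Words}(\phi,\psi,\trace)$, which is exactly the encoding of the initial trace into the formula announced before the statement.

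The formula would be $\chi = \chi_\trace \wedge \mathrm{tr}_n(\phi) \wedge \LTLnext^{n}\psi$, a conjunction of three parts. First, $\chi_\trace = \bigwedge_{k=0}^{n-1} \LTLnext^{k}(\bigwedge_{a \in A_k} a \wedge \bigwedge_{a \notin A_k} \neg a)$ pins the first $n$ letters to the concrete prefix $\trace = A_0 \cdots A_{n-1}$, so that $\text{Words}(\chi)$ ranges only over words of the form $\trace \cdot \sigma$. Second, $\LTLnext^{n}\psi$ ($n$ nested next operators) realizes the shift of $\vDashUpdate$, since $\trace\cdot\sigma, 0 \vDashUpdate \psi$ holds iff $\trace\cdot\sigma, n \vDash \psi$ under ordinary LTL. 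The third and only delicate part is $\mathrm{tr}_n(\phi)$, which must render the bounded-release semantics of $\vDashInitial$ at position $0$ in ordinary LTL.

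For the translation I would introduce a family $\mathrm{tr}_m$ indexed by a remaining budget $m \in \{0,\dots,n\}$ measuring how many positions separate the current index from the trace end $n$. The base case is $\mathrm{tr}_0 := \stripoperator$, justified by a lemma that for every index $i \geq n$ and every formula $\chi$ we have $\sigma, i \vDashInitial \chi$ iff $\sigma, i \vDash \stripoperator(\chi)$; this follows by a routine structural induction whose only interesting case is release, which $\vDashInitial$ makes vacuously true beyond $|\trace|$ exactly as $\stripoperator$ replaces it by $\LTLtrue$. For $m \geq 1$, $\mathrm{tr}_m$ commutes with the Boolean connectives, sets $\mathrm{tr}_m(\LTLnext\varphi) = \LTLnext\,\mathrm{tr}_{m-1}(\varphi)$, and unrolls the temporal operators by one step while decrementing the budget: $\mathrm{tr}_m(\varphi_1 \LTLuntil \varphi_2) = \mathrm{tr}_m(\varphi_2) \vee (\mathrm{tr}_m(\varphi_1) \wedge \LTLnext\,\mathrm{tr}_{m-1}(\varphi_1 \LTLuntil \varphi_2))$ and $\mathrm{tr}_m(\varphi_1 \LTLrelease \varphi_2) = \mathrm{tr}_m(\varphi_2) \wedge (\mathrm{tr}_m(\varphi_1) \vee \LTLnext\,\mathrm{tr}_{m-1}(\varphi_1 \LTLrelease \varphi_2))$. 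The recursion is well founded in the lexicographic order on the pair (budget, formula size) and produces a finite formula, which is all an expressiveness claim requires.

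Correctness would then follow from the central lemma $\sigma, n-m \vDashInitial \chi \iff \sigma, n-m \vDash \mathrm{tr}_m(\chi)$, proved by induction in the same lexicographic order. The Boolean and next cases are immediate; the until case uses that $\vDashInitial$ interprets until with the unrestricted standard semantics, so the one-step expansion identity transfers verbatim; and the release case relies on a bounded one-step expansion identity for $\vDashInitial$ over the window $[i,|\trace|)$, which I would verify directly from the definition. I expect the main obstacle to be exactly this release case together with the interaction between unbounded until operators and the trace boundary: an until evaluated before position $n$ may be witnessed only after $n$, so the argument must hand off cleanly from the bounded regime to the stripped regime once the budget reaches $0$. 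The strip lemma is precisely the device that makes this hand-off sound, and with it the induction closes. Assembling $\chi_\trace$, $\mathrm{tr}_n(\phi)$, and $\LTLnext^{n}\psi$ then gives $\text{Words}(\chi) = \text{Words}(\phi,\psi,\trace)$, completing the nontrivial inclusion and hence the equivalence.
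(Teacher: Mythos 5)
Your proof is correct, and at the top level it follows the same route as the paper: the easy direction via the instantiation $(\LTLtrue,\chi,\epsilon)$ is identical, and for the converse both arguments build the same three-part conjunction of a trace encoding, $\LTLnext^{n}\psi$ for the update shift, and a translation of $\phi$ that unrolls $n$ steps and replaces the surviving $\LTLrelease$-operators by $\LTLtrue$. The genuine difference is in that translation, and it matters. The paper uses $\stripoperator(expand_{n}(\phi))$, where $expand$ recurses under \emph{next} without any bookkeeping ($expand(\LTLnext\varphi)=\LTLnext\, expand(\varphi)$), so every release is unrolled $n$ times regardless of the next-depth at which it sits. Your budget-indexed family $\mathrm{tr}_m$, with $\mathrm{tr}_m(\LTLnext\varphi)=\LTLnext\,\mathrm{tr}_{m-1}(\varphi)$ and $\mathrm{tr}_0=\stripoperator$, unrolls each operator only as many times as positions remain before the trace boundary. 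This is not a cosmetic refinement: the bound in the $\vDashInitial$-clause for $\LTLrelease$ is the \emph{absolute} position $|\trace|$, so a release at next-depth $k$ has only $|\trace|-k$ initial-regime steps left. Concretely, for $\phi = \LTLnext(a \LTLrelease b)$ and $|\trace|=1$, the semantics makes $\phi$ vacuously true under $\vDashInitial$, and your $\mathrm{tr}_1(\phi)=\LTLnext\LTLtrue\equiv\LTLtrue$ agrees, whereas $\stripoperator(expand_1(\phi)) = \LTLnext((a\wedge b)\vee(b\wedge\LTLnext\LTLtrue))\equiv\LTLnext b$ over-constrains. So your budget mechanism is what actually makes the paper's sketch sound; likewise, your two supporting lemmas (the strip lemma at budget $0$, and the one-step expansion identity for the bounded release, which indeed holds at positions $<|\trace|$) close the lexicographic induction rigorously where the paper only gestures at equivalence. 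The only small omissions are the literal base cases $\mathrm{tr}_m(a)=a$ and $\mathrm{tr}_m(\neg a)=\neg a$, which should be stated explicitly since literals, not Boolean connectives, terminate the recursion.
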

\begin{proof}
\emph{LTL $\subseteq$ LiveLTL:} To show that $LTL \subseteq \liveltl$, let $\phi \in LTL$ be an arbitrary LTL formula.
One can directly obtain $\text{Words}(\phi)$ with $\text{Words}(\phi',\psi',\eta)$ by instantiating the \liveltl components with $\phi' = \top$, $\psi' = \phi$, and $\eta = \epsilon$.

\emph{\liveltl $\subseteq$ LTL:}
To show that $\liveltl \subseteq LTL$, we first shift the update formula to the end of the finite trace $\trace$ with  $\LTLnext^{|\trace|}(\psi)$.
The evaluation of the formula $\psi$ is thereby delayed to the starting point of the update system.
Secondly, we simulate the handover of the initial formula to the update system:
We use $expand_{|\trace|}$ to unroll the initial formula $\phi$ for $|\trace|$ time-steps.
The resulting formula is equivalent to $\phi$ and the operators $\LTLfinally$ and $\LTLglobally$ only occur inside $|\trace|$ \textit{next} operators.
Following the \liveltl semantics, we limit the influence of $\LTLrelease$: Let $\stripoperator : LTL \rightarrow LTL$ be a function syntactically substituting every $\LTLrelease$ by $\LTLtrue$.
The formula $\stripoperator(\phi')$ on the expanded formula simulates the operator $\vDashInitial$.
To obtain the equivalent LTL formula, we concatenate both formulas with an LTL encoding of the finite trace: $\LTLnext^{|\trace|}(\psi) \wedge \stripoperator(expand_n(\phi)) \wedge LTL(\trace)$, where 
$LTL(\trace) = \bigwedge_{0 \leq i \leq |\trace|-1} \LTLnext^i ((\bigwedge_{a \in \sigma_i} a) \wedge (\bigwedge_{a \in AP\backslash \sigma_i} \neg a))$.
\end{proof}

\subsection{Universal Live Update Model Checking}\label{app:modelcheckinguniversal}
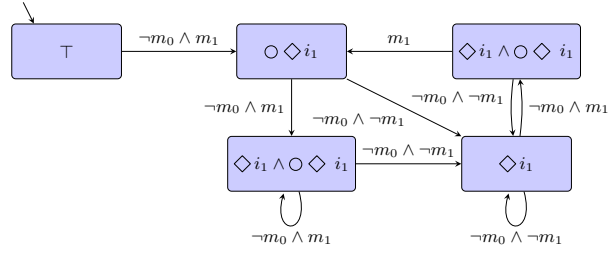
\begin{figure}[t]
\centering
\resizebox{.7\textwidth}{!}{
\begin{tikzpicture}
        \node[draw,fill=blue!20,minimum height=3em,minimum width=6em,rounded corners=2] (N0) at (-4,0) {$ \LTLtrue$};
        \node[draw,fill=blue!20,minimum height=3em,minimum width=6em,rounded corners=2] (N1) at (0,0) {$\LTLnext$ \usebox{\boxfinally}$i_1$};
    \node[draw,fill=blue!20,minimum height=3em,minimum width=6em,rounded corners=2] (N3) at (4,0) {\usebox{\boxfinally}$ i_1 \wedge \LTLnext$ \usebox{\boxfinally} $i_1$};
    \node[draw,fill=blue!20,minimum height=3em,minimum width=6em,rounded corners=2] (N4) at (4,-2) {\usebox{\boxfinally}$i_1$};
    \node[draw,fill=blue!20,minimum height=3em,minimum width=6em,rounded corners=2] (N5) at (0,-2) {\usebox{\boxfinally}$ i_1 \wedge \LTLnext$ \usebox{\boxfinally} $i_1$};

    \node[minimum height=0.2em,minimum width=2.1em,anchor=west] (NN) at (N0.west) {};
    \node[minimum height=0.2em,minimum width=2.1em,anchor=east] (NX) at (N3.east) {};
    \node[circle,inner sep=0pt] (I) at ($ (N0.north west) + (0.2,0.4) $) {};

    \path[->,>=stealth]
    (I)
      edge ($ (N0.north west) + (0.4,0) $)
    (N0)
      edge node[above] {$\neg m_0 \wedge m_1$} (N1)
    (NX)
      edge[loop right,opacity=0] node[opacity=0] {\phantom{$ m_{0}  $}} (N3)
    (N1)
      edge node[below left, xshift = 4pt] {$\neg m_0 \wedge \neg m_1$} (N4)
      edge node[left] {$\neg m_0 \wedge m_1$} (N5)
    (N3)
      edge[bend right=7] node [left, yshift = 5pt]{$ \neg m_0 \wedge \neg m_{1} $} (N4)
      edge node[above] {$m_1$}(N1)
    (N4)
      edge[loop below] node {$ \neg m_0 \wedge \neg m_1 $} (N4)
      edge[bend right = 7] node [right]{$\neg m_0 \wedge m_1$}(N3)
    (N5)
      edge[loop below] node {$ \neg m_0 \wedge m_1 $} (N5)
      edge[] node[above] {$ \neg m_0 \wedge \neg m_1 $} (N4)
    ;
  \end{tikzpicture}
  }
\caption{The obligation monitor for $\protect\phi_1$ with one base station combined with the top two states of the running example.}
\label{fig:escalatorOM}
\end{figure}

\begin{theorem*}[Complexity in $\phi + \psi$, and $TS_I \cdot TS_U$]
    The universal live update model checking problem is \textsc{PSPACE}-complete in $|\phi|+|\psi|$ and \textsc{NL} in $TS_I \cdot TS_U$.
 \end{theorem*}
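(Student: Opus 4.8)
The plan is to reduce universal live update model checking to a single instance of \emph{standard} LTL model checking on a combined structure, and then to read off both bounds from the classical complexity of LTL model checking (\textsc{PSPACE} in the formula, \textsc{NL} in the structure). Concretely, I would encode the quantification $\forall \trace \in \text{FinTraces}(TS_I)$ together with the two \liveltl operators $\vDashInitial$ and $\vDashUpdate$ into one ordinary LTL formula $\Phi$ over $TS_I \cdot TS_U$, so that the inclusion $\forall \trace: \trace \cdot \text{Traces}(TS_U) \subseteq \text{Words}(\phi,\psi,\trace)$ becomes the single check $TS_I \cdot TS_U \vDash \Phi$.

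First I would build the combined structure $TS_I \cdot TS_U$: take the disjoint union of the two systems, introduce a fresh proposition $update$, and add from every state of $TS_I$ an $update$-labeled edge into the initial state of $TS_U$, after which the run proceeds entirely inside $TS_U$. Because the handover point is chosen freely, the traces of this structure that take the $update$ edge at some position are exactly $\bigcup_{\trace \in \text{FinTraces}(TS_I)} \trace \cdot \text{Traces}(TS_U)$, with the marker $update$ recording the position $|\trace|$. The structure is linear in $|TS_I| + |TS_U|$, hence polynomial in $TS_I \cdot TS_U$.

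Next I would construct the adapted formula, reusing the encoding from the proof of \Cref{th:liveltlltl} but replacing the fixed length $|\trace|$ by the dynamic marker $update$. The update obligation is captured by $\LTLeventually(update \wedge \LTLnext \psi)$, which shifts the evaluation of $\psi$ to the handover position and thus realizes $\vDashUpdate$; the one-step offset accounts for the mismatch between transition-labeled updates and state-based evaluation already noted for the obligation monitor. The initial part is captured by a formula $\phi'$ obtained from $\phi$ by limiting every $\LTLrelease$ with $update$, the dynamic analogue of applying $\stripoperator$ after unrolling: before the marker $\phi'$ behaves like $\phi$, and once $update$ holds every release obligation (including every $\LTLglobally$, a release in PNF) is discharged, exactly as prescribed by the bound $|\trace| > j$ in the semantics of $\vDashInitial$, while the until/next operators keep their standard meaning over the full trace. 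Setting $\Phi = \phi' \wedge (\LTLglobally \neg update \vee \LTLeventually(update \wedge \LTLnext \psi))$, the first disjunct handles paths that never update, which lie outside the universal condition and are satisfied anyway since $TS_I \vDash \phi$; I would then prove that a trace of $TS_I \cdot TS_U$ satisfies $\Phi$ iff its $\trace$-prefix and $TS_U$-suffix satisfy $\trace\cdot\sigma,0 \vDashInitial \phi$ and $\trace\cdot\sigma,0 \vDashUpdate \psi$. As $|\Phi| \in O(|\phi| + |\psi|)$, the reduction is linear.

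The bounds then follow from model checking $\Phi$ over $TS_I \cdot TS_U$. Membership in \textsc{PSPACE} in $|\phi| + |\psi|$ is immediate from the Sistla--Clarke bound, and \textsc{PSPACE}-hardness follows from the reverse direction: with $\phi = \LTLtrue$ and a trivial $TS_I$ the universal condition collapses to ordinary model checking of $TS_U$ against $\psi$, already \textsc{PSPACE}-hard in the formula. For the structure, I would take the product of $TS_I \cdot TS_U$ with a B\"uchi automaton for $\neg\Phi$: for fixed formulas this automaton has constant size, searching the product for a reachable accepting cycle is \textsc{NLOGSPACE} in $TS_I \cdot TS_U$, and $\textsc{NL} = \text{co-}\textsc{NL}$ yields the bound for the inclusion itself. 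The main obstacle I anticipate is the correctness equivalence of the reduction --- in particular proving that limiting $\LTLrelease$ by the single dynamic marker $update$ faithfully mirrors the index bound $|\trace|$ of $\vDashInitial$ for \emph{all} handover points simultaneously, and that the transition-versus-state offset is treated consistently between the added $update$ edges and the $\LTLnext$ in the update obligation.
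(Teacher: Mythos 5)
Your proposal follows essentially the same route as the paper's proof: the same combined structure with fresh $update$-labeled edges redirected from $TS_I$ into the initial state of $TS_U$, the same $\text{LTL}_f$-style limitation of every $\LTLrelease$ in $\phi$ by $update$, the same triggering of the update specification via $\LTLeventually(update \wedge \LTLnext \psi)$, the same trivialization ($\phi = \top$, trivial $TS_I$) for the lower bound, and both read the final bounds off standard LTL model checking through a linear-size reduction. The only difference is one of explicitness: you spell out the \textsc{NL} upper bound (B\"uchi product, reachable accepting cycle, $\textsc{NL} = \text{co-}\textsc{NL}$) and the treatment of never-updating paths via the disjunct $\LTLglobally \neg update$, details the paper's sketch leaves implicit behind the environment assumption $\LTLeventually update$.
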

\begin{proof}[Sketch]
We obtain the lower bound by initializing $\phi$ with $\top$ and $TS_I$ with an empty system.
Model checking with \liveltl semantics is then verifying the update system against the update formula.
For the upper bound, we concatenate both, the specifications and the transition systems.
$TS_I$ and $TS_U$ are connected by a duplicate of every edge in $TS_I$ redirected to the initial state of $TS_U$ and annotated with a newly introduced atomic proposition $update$.
$update$ is controlled by the environment and switching the systems also obtains the inputs from the environment in this step.
The new atomic proposition also encodes the end of the initial system in the formula: Every occurrence of \textit{release} receives the limitation to $update$.
The construction follows the equi-satisfiability proof of \cite{LTLfreduction}.
We therefore obtain a formula that is co-safe if the environment assumption $\LTLeventually update$ holds.
To initialize the formula of the update system, $update$ also spawns $\psi$ which has to hold after starting the update system.
We combine all formulas and obtain one LTL formula and one system that are both linear in the size of the source formulas and transition systems.
Since we base the reduction on known $LTL_f$ results and simple concatenation of transition systems, we can use LTL model checking for \liveltl model checking.
Therefore, we obtain the complexity results from LTL for $|\phi| + |\psi|$ and $TS_I \cdot TS_U$.
\end{proof}

\section{Obligation Monitor Example}\label{app:OM}
We combine \Cref{fig:minimalOM} with the two states at the top of \Cref{fig:relaystation} to compute the exact reachable obligation states, the result is shown in \Cref{fig:escalatorOM}.
For readability, we previously discard all transitions leading out of the two state cycle of the relay station.
Assuming that we start with $\top$ as initial obligation represented by the initial state, we follow the transition $\neg m_0 \wedge m_1$ to the obligation monitor state $\LTLnext \LTLeventually i_1$, the top left state in the implementation.
To give an example for the offset of the monitor and the transition system, if we take the loop $\neg m_0 \wedge m_1$, we stay in the same state but change the obligation: The future system has to satisfy $\LTLeventually i_1 \wedge \LTLnext \LTLeventually i_1$.
The obligation monitor consists of 5 different states with 4 different obligation properties, whereas the implementation only has two states.
The monitor is deterministic and cut to the reachable states and transitions of the implementation.
\end{document}